\newcommand{\Endproof}{\hfill$\Box$}
\newcommand{\ket}[1]{|#1\rangle}
\DeclareMathOperator{\exv}{\mathbb{E}\textrm{ }}
\begin{document}

\title{Quantum versus Classical Online Streaming Algorithms with Logarithmic Size of Memory}
\author{Kamil~Khadiev$^{1,2,3}$ \and Aliya~Khadieva$^{1,2}$\and Dmitry Kravchenko$^{1}$ \and Alexander Rivosh$^{1}$ \and Ramis Yamilov$^{4}$ \and Ilnaz Mannapov$^{2,3}$}

\institute{
University of Latvia, Riga, Latvia
\and
Kazan Federal University, Kazan, Russia
  \and
Smart Quantum Technologies Ltd., Kazan, Russia
\and Yandex Ltd., Saint Petersburg, Russia 
                \\ \email{kamilhadi@gmail.com, aliya.khadi@gmail.com, kdmitry@gmail.com, alexander.rivosh@lu.lv, ramis.yamilov@gmail.com, ilnaztatar5@gmail.com}
}

\maketitle

\begin{abstract}We consider online algorithms with respect to the competitive ratio. Here, we investigate quantum and classical one-way automata with non-constant size of memory (streaming algorithms) as a model for online algorithms. We construct problems that can be solved by quantum online streaming algorithms better than by classical ones in a case of logarithmic or sublogarithmic size of memory. 
 \\
\textbf{Keywords:} quantum computation, online algorithm, automaton, streaming algorithm, online minimization problem, branching program, BDD
\end{abstract}

\section{Introduction}

Online algorithms are a well-known computational model for solving optimization problems.
The peculiarity is that the algorithm reads an input piece by piece
and should return an answer piece by piece immediately, even if an answer can
depend on future pieces of the input. The algorithm should return an answer for minimizing an objective function (the cost of an output). The most standard method to define the effectiveness is the competitive ratio \cite{st85,kmrs86}. 
%
Typically, online algorithms have unlimited computational power.
At the same time, it is interesting to solve online minimization problems in a case of a big input stream such that the stream cannot be stored completely in the memory. In that case, we can consider automata (streaming algorithm) as online algorithms. This model was explored in \cite{bk2009,gk2015,blm2015,kkm2018}. We are interested in {\em quantum online algorithms}. This model was introduced in \cite{kkm2018} and discussed in \cite{aakv2018}.
It is known that quantum online algorithms can be better than classical ones in the case of sublogarithmic size of memory \cite{kkm2018}. Here, we consider logarithmic size of memory (polynomial number of states) that is more common memory restriction for streaming models. In this case, only quantum online algorithms with repeated test were considered in \cite{y2009}. In this paper, we focus on online streaming algorithms (one-way automata for online minimization problems) that read an input only once. %
The question of comparing quantum and classical models was explored for streaming computation models (OBDDs and automata), but not for the {\em online} streaming algorithms \cite{l2009,gkkrw2007,agkmp2005,agky16,ss2005,kk2017,aakk2018,g15,AY12,AY15,l2006,gy2017,gy2018,gy2015}.

We present the ``Black Hats Method'' for constructing hard online minimization problems. We use it and construct  problems that separate power of quantum algorithms  from classical ones. Suppose that algorithms use only $O(\log n)$ bits of memory ($n^{O(1)}$ states), where $n$ is the length of the input.
%
There is a problem that has a quantum online streaming algorithm with a better competitive ratio than any classical (randomized or deterministic) online streaming algorithms. The problem is based on the $R$ function from \cite{ss2005}.
%
There is a problem that has quantum and randomized online algorithms with a better competitive ratio than any deterministic online algorithm. The problem is based on the Equality function and results from \cite{akv2008}.
%
For both cases, the quantum online streaming algorithms (with $O(\log n)$ qubits) have a better competitive ratio than any deterministic online algorithm with unlimited computational power.

%
Suppose that the algorithms use a constant size of memory (constant number of states). There is a problem that has a quantum online streaming algorithm with singel qubit with a better competitive ratio than any classical online streaming algorithm. The problem is based on the $PartialMOD$ function from \cite{AY12,agky14,agky16} and the ``Black Hats Method''.

  The paper is organized in the following way. Definitions are in Section \ref{sec:prlmrs}. The Black Hats Method is described in Section \ref{sec:bhm}. Quantum and randomized vs. deterministic online streaming algorithms are discussed in the first part of Section \ref{sec:application}; the second part contains results on quantum vs. classical models.
\section{Preliminaries}\label{sec:prlmrs}
%
{\bf An online minimization problem} consists of a set $\cal{I}$ of inputs and a cost function. Each input $I = (x_1, \dots , x_n)$ is a sequence of requests, where $n$ is a length of the input $|I|=n$. Furthermore, a set of feasible outputs (or solutions) ${\cal O}(I)$ is associated with each $I$; an output is a sequence of answers $O = (y_1, \dots, y_n)$. The cost function assigns a positive real value $cost(I, O)$ to $I\in{ \cal I}$ and $O\in{\cal O}(I)$. An optimal  solution for $I\in{\cal I}$ is $O_{opt}(I)=argmin_{O\in{\cal O}(I)}cost(I,O)$.

Let us define an online algorithm for this problem.
%
{\bf A deterministic online algorithm}  $A$ computes the output sequence $A(I) = (y_1,\dots , y_n)$ such that $y_i$ is computed from $x_1, \dots , x_i$.  
%
%
%
We say that $A$ is $c$-{\em competitive} if there exists a constant $\alpha\geq 0$ such that, for every $n$ and for any input $I$ of size $n$, we have: $cost(I,A(I)) \leq c \cdot cost(I,Opt(I)) + \alpha,$ where $Opt$ is an optimal offline algorithm for the problem and $c$ is the minimal number that satisfies the inequality. Also we call $c$ the {\bf competitive ratio} of $A$. If $\alpha = 0, c=1$, then $A$ is optimal.

{\bf A randomized online algorithm} $R$ computes an output sequence
$R^{\psi}(I) = (y_1,\cdots, y_n)$ such that $ y_i$ is computed from $\psi, x_1, \cdots, x_i$, where $\psi$ is a content of the random tape, i. e., an infinite binary sequence, where every bit is chosen uniformly at random and independently of all the others. By $cost(I,R^{\psi}(I))$ we denote the random variable expressing the cost of the solution computed by $R$ on $I$.
$R$ is $c$-competitive in expectation if there exists a constant $\alpha\geq 0$ such that, for every $I$, $\exv[cost(I,R^{\psi}(I))] \leq c \cdot cost(I,Opt(I)) + \alpha$.

We use one-way automata for online minimization problems as online algorithms with restricted memory. In the paper, we use the terminology for branching programs \cite{Weg00} and algorithms. We say that an automaton computes Boolean function $f_m$ if for any input $X$ of length $m$, the automaton returns result $1$ iff $f(X)=1$. Additionally, we use the terminology on memory from algorithms. We say that an automaton has $s$ bits of memory if it has $2^s$ states. Let us present the definitions of automata that we use. A {\bf deterministic automaton} with $s=s(n)$ bits
of memory that process input $I=(x_1,\dots,x_n)$ is a $4$-tuple $(d_0,D,\Delta,Result)$. The set $D$ is a set of states, $|D|=2^s$, $d_0\in D$ is an initial state. $\Delta$ is a transition function $\Delta:\{0,\dots,\gamma-1\}\times D\to D$, where $\gamma$ is a size of the input alphabet. $Result$ is an output function $Result:D\to\{0,\dots,\beta-1\}$, where $\beta$ is a size of the output alphabet. The computation starts from the state $d_0$. Then on reading an input symbol $x_j$ it changes the current state $d\in D$ to $\Delta(x_j,d)$. In the end of computation, the automaton outputs $Result(d)$. A {\bf probabilistic automaton} is a probabilistic counterpart of the model. It
chooses from more than one transitions in each step such that each transition is associated with a probability. 
Thus, the automaton can be in a probability distribution over states during the computation. A total probability must be $1$, i.e., a probability of outgoing transitions from a single state must be $1$. Thus, a probabilistic automaton returns some result for each input with some probability. For $v\in\{0,\dots,\beta-1\}$, the automaton returns a result $v$ for an input, with bounded-error if the automata returns the result $v$ with probability at least $1/2 + \varepsilon$ for some $ \varepsilon \in (0,1/2] $. The automaton computes a function $f$ with bounded error  if it returns $f(X)$ with bounded error for each $X\in\{0,\dots,\gamma-1\}^n$. The automaton computes a function $f$ exactly if $\varepsilon=0$. 

A {\bf deterministic online streaming algorithm} with $s=s(n)$ bits
of memory that process input $I=(x_1,\dots,x_n)$ is a $4$-tuple $(d_0,D,\Delta,Result)$. The set $D$ is s set of states, $|D|=2^s$, $d_0\in D$ is an initial state. $\Delta$ is a transition function $\Delta:\{0,\dots,\gamma-1\}\times D\to D$. $Result$ is an output function $Result:D\to\{0,\dots,\beta-1\}$. The computation starts from the state $d_0$. Then on reading an input symbol $x_j$ it change the current state $d\in D$ to $\Delta(x_j,d)$ and outputs $Result(d)$.
A {\bf randomized online streaming algorithm}  has a similar definition, but with respect to definitions of the corresponding model of online algorithms.

{\bf Comment.} Note that any online algorithm can be simulated by online streaming algorithm using $n$ bits of memory.

Let us consider a {\bf quantum online streaming algorithm}. The good sources on quantum computation are \cite{nc2010,AY15}.
For some integers $n>0 $, a quantum online algorithm $ Q$ with $q$ qubits is defined
on input $I=(x_1,\dots,x_n)\in\{0,\dots,{\gamma-1}\}^n $ and outputs 
$(y_1,\dots,y_{n})\in\{0,\dots,\beta-1\}^{n}$.
A memory of the quantum algorithm is a state of a quantum register of $q$ qubits. In other words, the computation of $Q$ on an input $I$ can be traced by a $ 2^q$-dimensional vector from Hilbert space over the field of complex numbers. The initial state is a given $ 2^q$-vector $ |\psi\rangle_0$. In each step $j\in\{1,\dots,n\}$ the input variable $ x_{j} $ is tested and then a unitary $ 2^q\times2^q$-matrix $ G^{x_{j}}$ is applied:  
$
    |\psi\rangle_j = G^{x_{j}} (|\psi\rangle_{j-1}),
$ 
where  $ |\psi\rangle_j $ represents the state of the system after the $ j$-th step.
Depending on an input symbol, the algorithm can measure one or more quantum bits. If the outcome of the measurement is $u$, then the algorithm continues computing from a state $|\psi(u)\rangle$ and tha algorithm can output $Result(u)$ on this step. Here  $Result:
\{0,\ldots,2^{q}-1\}\to\{0,\dots, \beta-1\}$ is a function that converts the result of the measurement to an output variable. The algorithm $Q$ is $c$-competitive in expectation if there exists a non-negative constant $\alpha$ such that, for every $I$, $\mathbb{E}[cost(I,Q(I))] \leq c \cdot cost(I,Opt(I)) + \alpha$.

Let us describe a measurement process. Suppose that $Q$ is in a state $|\psi\rangle=(v_1, \dots, v_{2^q})$ before a measurement and the algorithm measures the $i$-th qubit. Suppose states with numbers $a^0_1,\dots,a^0_{2^{q-1}}$ correspond to $0$ value of the $i$-th qubit, and states with numbers $a^1_1,\dots,a^1_{2^{q-1}}$ correspond to $1$ value of the qubit. Then the result of the qubit's measurement is $1$ with probability $pr_1= \sum_{j=1}^{2^{q-1}}|v_{a^1_j}|^2$ and $0$ with probability $pr_0=1-pr_1$. If
the algorithm measures $v$ qubits on the $j$-th step, then $u \in\{0,\dots,2^v-1\}$ is an outcome of the measurement.

A quantum automata have the similar definition, but it returns $Result(u)$ in the end of the computation. A definition of a function computing is similar to the probabilistic case. See \cite{AY15} for more details on quantum automata.

In the paper we use results on id-OBDD. This model can be considered as an automaton with transition function that depends on position of input head. You can read more about classical and quantum id-OBDDs in
\cite{Weg00,ss2005,agkmp2005,agky14,agky16,kk2017}. Formal definitions of
this model is in Appendix \ref{apx:obdd}. The following relations 
between models are folklore:

\begin{lemma}\label{lm:rel-obdd-sa}
 If a quantum (probabilistic) id-OBDD  $P$ of width $2^w$ computes a Boolean function $f$, then there is a quantum (probabilistic) automaton computing $f$ that uses $w + \lceil\log_2 n \rceil$ qubits (bits) of memory.
If any deterministic (probabilistic) id-OBDD  $P$ computing a Boolean function $f$ has a width at least $2^w$, then any deterministic (probabilistic) automaton computing $f$ uses at least $w$ bits of memory.
\end{lemma}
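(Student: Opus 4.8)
The plan is to prove both directions of Lemma~\ref{lm:rel-obdd-sa} by exhibiting explicit simulations in each direction, treating the quantum, probabilistic and deterministic cases uniformly (the constructions are identical; only the bookkeeping on transition amplitudes/probabilities differs).

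**Forward direction (OBDD $\to$ automaton).** Suppose $P$ is an id-OBDD of width $2^w$ reading its $n$ variables in the fixed order $x_1,\dots,x_n$. The key idea is that an automaton, unlike an OBDD, has no notion of ``layer'', so it cannot tell which variable it is currently reading; we repair this by augmenting the state with an explicit layer counter. Concretely, the automaton's state set is $D = \{0,\dots,n\} \times \{\text{nodes of one layer}\}$, which has $(n+1)\cdot 2^w \le 2^{w+\lceil\log_2 n\rceil}$ elements for $n$ large enough (I would remark that the off-by-one is absorbed either by a slightly more careful count or by the additive slack usually permitted; this is the only place a routine size estimate enters). The initial state is $(0, v_{\mathrm{start}})$. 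On reading the $j$-th input symbol from state $(j-1, v)$, the automaton consults the transition of $P$ at layer $j$ out of node $v$ on that symbol, moving to $(j, v')$; in the quantum/probabilistic cases the amplitude/probability of each branch is copied verbatim from $P$, and unitarity/stochasticity is inherited because the counter component is updated deterministically and bijectively. After $n$ steps the automaton is in a state $(n, v)$ with $v$ a sink of $P$, and it outputs (or measures and outputs, in the quantum case) exactly the label of $v$. Since the whole computation is a step-by-step mirror of $P$, it accepts with the same amplitude/probability, so it computes $f$ with the same error parameter $\varepsilon$ (in particular exactly, if $P$ was exact). This uses $w+\lceil\log_2 n\rceil$ qubits (bits).

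**Converse direction (automaton $\to$ OBDD, contrapositive).** Here I would argue by contraposition: given a deterministic (resp.\ probabilistic) automaton computing $f$ with fewer than $w$ bits of memory, I build a deterministic (resp.\ probabilistic) id-OBDD of width $<2^w$ computing $f$. This direction is the easier simulation: an automaton with $2^s$ states is already ``almost'' an OBDD of width $2^s$ reading variables in the order $x_1,\dots,x_n$; one simply unrolls it into $n+1$ layers, layer $j$ being a copy of the state set, with the edge out of a layer-$(j-1)$ node on symbol $a$ going to the layer-$j$ node given by the automaton's transition function $\Delta(a,\cdot)$, and the sinks at layer $n$ labelled by $Result$. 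The width is exactly $2^s < 2^w$. Contrapositively, if every id-OBDD for $f$ has width at least $2^w$, then every automaton for $f$ needs at least $w$ bits.

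**Where the work is.** None of this is deep — the statement is explicitly called folklore — so the ``main obstacle'' is purely a matter of care rather than difficulty: getting the counter-augmentation size bound to land exactly on $w+\lceil\log_2 n\rceil$ (rather than $w+\lceil\log_2(n+1)\rceil$), and checking that in the quantum case the augmented transition operator is genuinely unitary. For the latter I would note that applying a (possibly position-dependent) unitary $G^{x_j}$ on the ``node'' register while performing the fixed permutation $j-1 \mapsto j$ on the ``counter'' register is a tensor/controlled operation and hence unitary, and that the intermediate measurements of $P$ are simply copied over; so no new normalization issue arises. I would close by observing the same unrolling argument gives the probabilistic and deterministic versions of the converse verbatim, which is all the paper needs downstream.
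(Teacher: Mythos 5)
Your proof is correct and is precisely the standard counter-augmentation / unrolling argument; the paper itself gives no proof of Lemma~\ref{lm:rel-obdd-sa} at all, merely labelling it folklore, so your write-up supplies exactly what the authors left implicit. The only point needing care is the one you already flag --- the $(n+1)\cdot 2^w$ count versus $2^{w+\lceil\log_2 n\rceil}$ --- and it is indeed harmless (e.g.\ take the counter over $\{1,\dots,n\}$ with the post-reading value wrapping around, since the computation has ended by then), while your observation that the position-controlled block operator is unitary because it is a direct sum of unitaries composed with a permutation of blocks correctly disposes of the only genuinely checkable claim in the quantum case.
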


\section{The Black Hats Method for Constructing Online Minimization Problems}\label{sec:bhm}
Let us define a method which allows us to construct hard online minimization problems. We call it the ``Black Hats Method''.
In this paper, we say a Boolean function $f$, but in fact we consider a family of Boolean functions $f=\{f_1,f_2,\dots\}$, for $f_m:\{0,1\}^m\to\{0,1\}$. We use notation $f(X)$ for $f_m(X)$ if the length of $X$ is $m$ and if it is clear from the context.

Suppose we have a Boolean function $f$ and positive integers $k,r,w,t$, where $k$ mod $t=0$, $r<w$. We define the online minimization problem $BH^t_{k,r,w}(f)$ as follows.
We have $k$ guardians and $k$ prisoners. They stay one by one in a line like $
G_1 P_1 G_2 P_2 \dots$, where $G_i$ is a guardian, $P_i$ is a prisoner. The prisoner
$P_i$ has an input $X_i$ of length $m_i$ and computes a function $f_{m_i}(X_i)$. If the
result is $1$, then the prisoner paints his hat black; otherwise, he paints it
white. Each guardian wants to know whether a number of following black hats is odd. We can say that the $i$-th guardian wants to compute $\bigoplus_{i=j}^k f_{m_i}(X_i)$. 
Formally, the problem is
\begin{definition}[The Black Hats Method]\label{def:bhm}
We have a Boolean function $f$. Then an online minimization problem $BH^t_{k,r,w}(f)$, for positive integers $k,r,w,t$, where $k$ mod $t=0$, $r<w$ is the following.
 Suppose we have an input $I=(x_1,\dots,x_n)$
and $k$ positive integers $m_1,\dots,m_k$, where $n=\sum_{i=1}^{k}(m_i+1)$. 
Let $I$ be such that $I=(2,X_1,2,X_2,2,X_3,2,\dots,2,X_k)$, where
$X_i\in\{0,1\}^{m_i}$, for $i\in\{1,\dots,k\}$. Let $O\in{\cal O}(I)$ and let
$O'=(y_1,\dots,y_k)$ be answer variables corresponding to input variables with
value $2$ (output variables for guardians). An variable $y_j$ corresponds to $x_{i_j}$, where $i_j=j+\sum_{r=1}^{j-1}m_r$. Let 
$g_j(I)=\bigoplus_{i=j}^k f_{m_i}(X_i)$.
We separate all answer variables $y_i$ into $t$ blocks of length $ z=k/t $. A
cost of the $i$-th block is $c_i$. Here $c_i=r$, if $y_j=g_j(I)$ for $j\in\{(i-1)z+1,\dots,i\cdot z\}$; and $c_i=w$, otherwise.
The cost of the whole output is $cost^t(I,O)=c_1+\dots+c_t$.
\end{definition} 

Let us discuss the method. If we have a quantum or randomized streaming algorithm for $f$ using a small amount of memory, then it is enough to guess the result of the first guardian to solve the problem.
Moreover, if there is no randomized streaming algorithm with small memory for $f$, then we cannot solve   $BH^t_{k,r,w}(f)$. The only way to reduce the competitive ratio is guessing the answers. Suppose we have a quantum streaming algorithm that uses small memory for $f$. Then long blocks can increase the gap between the competitive ratios of the randomized and quantum algorithms because all guardians inside a block should return right answers.
We have a similar situation with deterministic online algorithms. In that case, we cannot guess answers; and we have a more significant gap between competitive ratios for quantum and deterministic algorithms. 
The construction of the problem $BH^t_{k,r,w}(f)$ allows us to get a good competitive ratio by guessing only one bit; for example, this effect cannot be achieved by considering independent instances of a Boolean function $f$. 
These results are presented formally in theorems of this section. 
\begin{theorem}\label{th:bh-lower} Let $s$ be a positive integer, let $f$ be a Boolean function. Suppose there is no deterministic  automaton for $f$ that uses at most $s$ bits of memory. Then there is no $c$-competitive deterministic online streaming algorithm for $BH^t_{k,r,w}(f)$ that uses $s$ bits of memory, where $c<\frac{w}{r}$.
\end{theorem}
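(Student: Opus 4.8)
The plan is to argue by contradiction. Suppose $A$ is a deterministic online streaming algorithm for $BH^t_{k,r,w}(f)$ that uses $s$ bits, so its state set $D$ has $|D|\le 2^{s}$, and suppose $A$ is $c$-competitive with $c<\frac{w}{r}$; from $A$ I will build a deterministic automaton for $f$ with at most $s$ bits, contradicting the hypothesis. Two remarks set the stage. First, on every instance $I$ the offline optimum answers each $G_j$ by $g_j(I)$, so $cost(I,Opt(I))=tr$; since additive constants are swamped once the parameters grow (the meaningful regime, in which $tr$ is unbounded), $A$ having all $t$ blocks bad on instances of unbounded length would already force the competitive ratio up to $\frac{w}{r}$. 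Second, inside one block $g_j(I)\oplus g_{j+1}(I)=f_{m_j}(X_j)$, so whenever a block is answered entirely correctly, its consecutive answers obey $y_j\oplus y_{j+1}=f_{m_j}(X_j)$. Reading this off $A$'s run: if $d$ is the state of $A$ right after the $j$-th separator symbol $2$, then $Result(d)\oplus Result\bigl(\Delta(2,\Delta^{\ast}(d,X_j))\bigr)=f_{m_j}(X_j)$, where $\Delta^{\ast}(d,\cdot)$ runs $A$'s transitions on a $\{0,1\}$-string. In words, a fully-correct block forces $A$ to act as an ``$f$-meter'' on each prisoner segment it contains.

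The core of the argument is a dichotomy, fixing a length $m$ on which $f_m$ is computed by no $\le s$-bit automaton (such $m$ exists by hypothesis; in particular $f_m$ is non-constant). Either $(a)$ there is a state $d\in D$ reachable in $A$ immediately after some separator with $Result(d)\oplus Result\bigl(\Delta(2,\Delta^{\ast}(d,X))\bigr)=f_m(X)$ for all $X\in\{0,1\}^m$ --- and then the automaton with state set $D$, initial state $d$, transitions $\Delta$ restricted to $\{0,1\}$, and output function $d'\mapsto Result(d)\oplus Result(\Delta(2,d'))$ computes $f_m$ with $\le 2^{s}$ states, contradicting the choice of $m$; or $(b)$ for every such reachable post-separator state $d$ this function differs from $f_m$, so there is $X^{\ast}_{d}\in\{0,1\}^m$ with $Result(d)\oplus Result\bigl(\Delta(2,\Delta^{\ast}(d,X^{\ast}_{d}))\bigr)\neq f_m(X^{\ast}_{d})$. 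In case $(b)$ I build one adversarial instance: all prisoner lengths equal $m$, and, processing $A$ left to right, for each block $i$, when $A$ reaches the state $d$ just after the first separator of block $i$, set that block's first prisoner's input to $X^{\ast}_{d}$ (the remaining prisoners arbitrary, say all-zero). Then inside every block the first consecutive pair of answers violates $y_j\oplus y_{j+1}=f_m(X_j)$, so every block is bad: $cost(I,A(I))=tw$ while $cost(I,Opt(I))=tr$, forcing the competitive ratio to be at least $\frac{w}{r}$, against $c<\frac{w}{r}$. Either horn contradicts an assumption, proving the theorem.

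The step I expect to be the main obstacle is the adversary in $(b)$: the bad prisoner string is chosen adaptively --- it depends on the internal state $A$ currently occupies, hence on earlier prisoners --- so it must be described as a left-to-right process, and one must verify that corrupting only the first intra-block segment really kills the whole block. This uses that each block contains such a segment, i.e.\ the block length $z=k/t$ is at least $2$ (the case of long blocks, which the construction is designed for); for $z=1$ one argues instead that the ``base guesses'' $g_j(I)$ at the first guardian of every block cannot all be met at once, and the resulting bound is weaker. The only other fussy point, the additive constant in the competitive ratio, is harmless precisely because the instance built in $(b)$ is uniform in $m$ and the parameters may be taken large, so the additive constant is negligible next to $cost(I,Opt(I))=tr$.
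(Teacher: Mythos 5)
Your dichotomy and the adversary it yields are sound as far as they go, but they prove a weaker statement than the theorem. In case $(b)$ you obtain, per block, one violated local constraint $y_j\oplus y_{j+1}\neq f_m(X_j)$, which forces the block to be wrong only when both guardians $j$ and $j+1$ lie in that block, i.e.\ only when $z=k/t\geq 2$. The theorem is stated for every $t$ with $k \bmod t=0$, including $t=k$ (so $z=1$), and that case is used downstream (the application theorems range over $t\in\{1,\dots,k\}$). You concede the point yourself --- ``for $z=1$ \dots the resulting bound is weaker'' --- and indeed your pairing argument only yields that the errors $y_j\oplus g_j$ alternate, so about half the guardians are wrong and the ratio obtained is roughly $(w+r)/(2r)$, not the claimed $w/r$. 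So there is a genuine gap, localized entirely in the $z=1$ case.

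The missing idea is a stronger dichotomy. The paper argues, for each reachable post-separator state: either the next guardian's answer, viewed as a function of the prisoner's input, determines $f_m$ (then an $s$-bit automaton computes $f_m$ or $\lnot f_m$, contradiction), or there exist two inputs $X^0,X^1$ with the \emph{same} next answer but $f_m(X^0)=0$ and $f_m(X^1)=1$. This second horn is much more useful than your single witness $X^{\ast}_d$: it lets the adversary set $f_m(X_i)$ to any desired bit $\sigma_i$ while all the answers $y_1,\dots,y_k$ stay pinned down, and the forward choice $\sigma_i=y_i\oplus y_{i+1}$ for $i<k$ together with $\sigma_k=y_k\oplus 1$ gives $g_i=\lnot y_i$ for \emph{every} guardian. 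Then every block is wrong regardless of $z$, the cost is exactly $tw$, and the bound $w/r$ follows uniformly in $t$. If you upgrade your case split to this form, the rest of your write-up --- in particular your careful description of the adaptive, left-to-right construction, which the paper leaves implicit --- goes through unchanged.
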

\begin{proof}
Let us consider any online streaming algorithm $A$ for the $BH^t_{k,r,w}(f)$  problem that uses at most $s$ bits of memory. Suppose that $A$ returns $y_1$ as an answer of the first guardian.
Let us prove that there are two parts of the input $X_1^0,X_1^1\in\{0,1\}^{m_1}$  such that $A$ returns the same value $y_2$ for both, but $f(X_1^0)=0, f(X_1^1)=1$.

Assume that there is no such triple $
(y_2,X_1^0, X_1^1)$. Then it means that we can construct an automaton $A'$ that uses $s$ bits of memory and has the following property:  $A'(X_1')=A'(X_1'')$ iff $f(X_1')=f(X_1'')$, for any $X_1',X_1''\in\{0,1\}^{m_1}$. The automaton $A'$ emulates the algorithm $A$. Therefore, $A'$ computes $f$ or $\lnot f$. In the case of $\lnot f$, we can construct $A''$ such that $A''(X_1)=\lnot A'(X_1)$. It is a contradiction with the claim of the theorem.
By the same way we can show that we have similar triples $
(y_{i+1},X_i^0, X_i^1)$ for $i\in\{2,\dots,k\}$.

Let us choose $\sigma_i=y_i\oplus 1 \oplus \bigoplus_{j=i+1}^{k}\sigma_{j}$, for $i\in \{1,\dots, k\}$.
Let us consider an input $I_A=2X_1^{\sigma_1}2X_2^{\sigma_2}2\dots2X_k^{\sigma_k}$.
An optimal offline solution is $(g_1,\dots,g_k)$ where
$g_i=\bigoplus_{j=i}^k \sigma_j$.

Let us prove that $g_i\neq y_i$ for each $i\in\{1,\dots,k\}$. We have
$\sigma_i=y_i\oplus 1 \oplus \bigoplus_{j=i+1}^{k}\sigma_{j}$, therefore
$y_i=\sigma_i\oplus 1 \oplus \bigoplus_{j=i+1}^{k}\sigma_{j}=1 \oplus \bigoplus_{j=i}^{k}\sigma_{j}=1\oplus g_i$, so $y_i = \lnot g_i$.

Hence, all answers are wrong and $cost^t(I_A,A(I_A))=tw$. So the competitive ratio $c$ cannot be less than $ tw/(tr)=w/r$.
\Endproof
\end{proof}
\begin{theorem}\label{th:bh-rand-lower}
Let $s$ be a positive integer, let $f$ be a Boolean function. Suppose there is no probabilistic automaton that uses at most $s$ bits of memory and computes $f$ with bounded error. Then there is no $c$-competitive in expectation randomized online streaming algorithm $A$ for $BH^t_{k,r,w}(f)$ that uses $s$ bits of memory, where $c<2^{-z}+(1-2^{-z})w/r$, $z=k/t$. 
\end{theorem}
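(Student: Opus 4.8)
The plan is to mirror the structure of the proof of Theorem~\ref{th:bh-lower}, but to replace the deterministic indistinguishability argument with a probabilistic one, and then to reason about the expected cost of a randomized algorithm on a suitable hard distribution over inputs. So suppose $A$ is a randomized online streaming algorithm for $BH^t_{k,r,w}(f)$ using $s$ bits of memory. First I would argue that, for each prisoner block $i\in\{1,\dots,k\}$, there exist two inputs $X_i^0,X_i^1\in\{0,1\}^{m_i}$ with $f(X_i^0)=0$, $f(X_i^1)=1$ such that the \emph{distribution} of the state of $A$ (and hence of every subsequently emitted answer variable, including the guardian answer $y_{i+1}$) is the same after reading $2X_i^0$ as after reading $2X_i^1$, when started from the same state distribution. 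If no such pair existed for some $i$, then by distinguishing the state distributions we could build a bounded-error probabilistic automaton on $s$ bits that computes $f$ (or $\lnot f$, which we can negate via $Result$), contradicting the hypothesis. Here one has to be slightly careful: "same distribution on states" is what we really get from the pigeonhole/averaging argument, and that is exactly enough to make the guardian outputs statistically indistinguishable.

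Next, as in Theorem~\ref{th:bh-lower}, I would pick the fooling input by a backward recursion. Fix the values $\sigma_i\in\{0,1\}$ recursively by $\sigma_i = 1\oplus(\text{the ``wrong-making'' choice})$; concretely, I want that on input $I_A=2X_1^{\sigma_1}2\cdots2X_k^{\sigma_k}$ the optimal offline answers $g_i=\bigoplus_{j=i}^k\sigma_j$ are \emph{forced} to disagree with whatever $A$ would typically output. The key point is that $g_i$ depends on $\sigma_i,\dots,\sigma_k$ only, while by the indistinguishability step the conditional distribution of $A$'s guardian answer $y_i$ (given the earlier random choices and the earlier part of the input) does not depend on which of $X_i^0,X_i^1$ we plugged in — so we can choose $\sigma_i$ adversarially after $A$'s behavior at position $i$ is pinned down, forcing each $y_i$ to be wrong unless $A$ simply guessed. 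In other words, against input $I_A$, for each guardian $i$ the algorithm outputs $y_i=g_i$ with probability at most $1/2$ over its randomness, and these events behave "independently enough" within a block.

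The last step is the cost computation. A block of $z=k/t$ consecutive guardians contributes cost $r$ only if all $z$ answers in it are correct, and cost $w$ otherwise. Since within a block each of the $z$ guardian answers is correct with probability at most $1/2$ and — this is the point that needs care — the correctness events across the block can be taken to be independent (because after the indistinguishability reduction the algorithm's joint distribution on the block's outputs is a fixed distribution that we fool coordinate by coordinate, so we can realize the worst case as independent unbiased coin flips), the probability that the whole block is correct is at most $2^{-z}$. Hence the expected cost of each block is at least $2^{-z}r + (1-2^{-z})w$, so $\mathbb{E}[cost^t(I_A,A(I_A))] \ge t\bigl(2^{-z}r+(1-2^{-z})w\bigr)$, while $cost^t(I_A,Opt(I_A)) = tr$. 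Dividing, any $c$ satisfying the competitiveness inequality must have $c \ge 2^{-z} + (1-2^{-z})w/r$ (the additive constant $\alpha$ is absorbed since we may take $k$, and hence $n$, arbitrarily large while keeping the per-block structure fixed), which is the claimed bound.

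I expect the main obstacle to be making the "independence within a block" claim fully rigorous: a priori the guardian answers $y_{(i-1)z+1},\dots,y_{iz}$ are produced by a single stateful randomized machine and are highly correlated. The clean way around this is to observe that the indistinguishability lemma lets us replace, one prisoner at a time and in order, the adversary's choice so that each guardian's correctness reduces to an independent fair coin that the algorithm flips (any better-than-$1/2$ bias on some $y_i$ would, via the same state-distribution argument, give a bounded-error automaton for $f$); once each guardian's "is it correct" bit is an independent $\mathrm{Bernoulli}(\le 1/2)$, the block bound $2^{-z}$ and the cost estimate follow immediately. The remaining work — choosing the $\sigma_i$ by backward recursion exactly as in Theorem~\ref{th:bh-lower}, and checking $y_i=\lnot g_i$ whenever $A$ did not guess — is routine and parallels the deterministic proof.
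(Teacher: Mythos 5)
Your proposal follows essentially the same route as the paper's own proof: reduce each guardian's answer to an (at best) unbiased guess using the hypothesis that no $s$-bit probabilistic automaton computes $f$ with bounded error, then observe that a block costs $r$ only if all $z$ guesses in it succeed, giving expected block cost $2^{-z}r+(1-2^{-z})w$ and the claimed ratio. The paper's proof is itself only a short sketch asserting exactly these steps; the two points you flag as delicate --- that the pigeonhole argument should yield \emph{identical} state distributions rather than merely distributions that no single bounded-error automaton can separate with a uniform gap, and that the correctness events within a block may be treated as independent despite being produced by one stateful machine --- are precisely the points the paper also leaves unjustified, so your write-up matches the intended argument and is, if anything, more explicit about where the remaining work lies.
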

\begin{proof}We can show that an algorithm $A$ cannot compute an answer of a guardian $y_i$ with bounded error. The idea of the proof is similar to the proof of Theorem \ref{th:bh-lower}. It means that the only way to answer is guessing $y_i$ with probability $0.5$. We achieve the claimed competitive ratio for this strategy.  
If the algorithm wants to get a cost $r$ for a block, then it should guess all output bits of the block. So, the cost of the $i$-th block is $c_i=(1-2^{-z})w+2^{-z}r$. Therefore, $cost^t(I_A,A(I_A))=t \cdot((1-2^{-z})w+2^{-z}r)$. Hence, the algorithm is $c$ competitive in expectation, for $c\geq t \cdot((1-2^{-z})w+2^{-z}r)/(tr)=2^{-z}+(1-2^{-z})w/r$.
\Endproof
\end{proof}
The following theorem is a bound on the competitive ratio in the case of an unlimited computational power for a deterministic online algorithm. 
\begin{theorem}\label{th:bh-general-d-lower} 
There is no $c$-competitive deterministic online algorithm  for $BH^t_{k,r,w}(f)$, where $c< \big(\lfloor (t+1)/2 \rfloor \cdot w + (t - \lfloor (t+1)/2 \rfloor\big)\cdot r)/(tr)$.
\end{theorem}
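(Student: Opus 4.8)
The plan is the standard adversary argument: against an arbitrary deterministic online algorithm $A$, I would build a single input $I_A$ on which $cost^t(I_A,A(I_A))\ge \lfloor (t+1)/2\rfloor\, w+\bigl(t-\lfloor (t+1)/2\rfloor\bigr)r$ while $cost^t(I_A,O_{opt}(I_A))=tr$, and then divide — exactly the way Theorems~\ref{th:bh-lower} and~\ref{th:bh-rand-lower} conclude. As a preliminary, fix lengths $m_1,\dots,m_k$ (taken as large as desired) for which each $f_{m_i}$ is non-constant, and fix strings $X_i^0,X_i^1\in\{0,1\}^{m_i}$ with $f_{m_i}(X_i^b)=b$.

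The engine of the argument is a timing observation. Run $A$ on the prefix $2X_1^0\,2X_2^0\,2\cdots 2X_{k-1}^0\,2$; that is, use $X_i=X_i^0$ for every $i<k$ and postpone the choice $X_k\in\{X_k^0,X_k^1\}$ to the very end. Since the $j$-th guardian answer $y_j$ is computed from $x_1,\dots,x_{i_j}$, i.e.\ it depends only on $X_1,\dots,X_{j-1}$, all of $y_1,\dots,y_k$ are already committed before a single symbol of $X_k$ is read, so they are fixed bits. On every such input $f_{m_i}(X_i)=0$ for $i<k$, hence $g_j(I)=\bigoplus_{i=j}^k f_{m_i}(X_i)=f_{m_k}(X_k)=:b$ \emph{simultaneously} for every guardian $j$. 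Therefore the optimal offline solution is the constant vector $(b,\dots,b)$ of cost $tr$, and a block is answered correctly by $A$ precisely when $y_j=b$ for all $j$ in it.

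Now the combinatorial step. Classify each of the $t$ blocks as \emph{all-$0$}, \emph{all-$1$}, or \emph{mixed} according to the bits $A$ assigned to the $y_j$ inside it, and let $n_0,n_1,n_m$ be the respective counts, so $n_0+n_1+n_m=t$. If the adversary takes $b=0$, the wrong blocks are exactly the all-$1$ and the mixed ones ($n_1+n_m$ of them); if it takes $b=1$, they are the all-$0$ and the mixed ones ($n_0+n_m$). Choosing $X_k$ to maximise this forces at least $\max(n_0+n_m,\,n_1+n_m)\ge\lceil (t+n_m)/2\rceil\ge\lceil t/2\rceil=\lfloor (t+1)/2\rfloor$ wrong blocks. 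Since $r<w$, a wrong block costs $w$ and a correct one costs $r$, so $cost^t(I_A,A(I_A))\ge \lfloor (t+1)/2\rfloor\, w+\lfloor t/2\rfloor\, r$; using $\lfloor t/2\rfloor=t-\lfloor (t+1)/2\rfloor$ and dividing by $tr$ yields the claimed lower bound on $c$.

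The step that requires care is realising that manipulating the inner blocks gains the adversary nothing: once $X_1,\dots,X_{k-1}$ are forced to $f$-value $0$, all the $g_j$ collapse to one single bit, and that bit — chosen after every $y_j$ has been committed — already corrupts at least half of the blocks by the two-case counting above. One should, as usual, also verify the small identities $\lceil t/2\rceil=\lfloor (t+1)/2\rfloor$ and $\lfloor t/2\rfloor=t-\lfloor (t+1)/2\rfloor$, and take the $m_i$ large so that $|I_A|$ can be made arbitrarily large.
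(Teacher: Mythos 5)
Your proof is correct and follows essentially the same adversary as the paper: force $f(X_i)=0$ for $i<k$, observe that every $y_j$ is committed before $X_k$ is read, and then pick $X_k$ so that the single common correct bit $b$ disagrees with at least half of the algorithm's output. The only difference is cosmetic — the paper takes the majority over individual guardian answers ($\lfloor(k+1)/2\rfloor$ wrong guardians, then converts to wrong blocks), while you take the majority directly at the block level via the all-$0$/all-$1$/mixed classification, which is a slightly cleaner way to reach the same $\lfloor(t+1)/2\rfloor$ wrong blocks; your explicit non-constancy assumption on $f$ is also a hypothesis the paper's proof silently needs.
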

\begin{proof} We can construct an input $I$ such that at least $\lfloor (k+1)/2 \rfloor $ guardians return wrong answers. 
Suppose that the algorithm $A$ receives the input $I=(2,X_1,2,X_2,2,\dots,2,X_k)$, where $X_i,\in\{0,1\}^{m_i}$. 

Let us choose $X_i$ such that $f(X_i)=0$ for $i \in \{1,\dots,k-1\}$.
Then $A$ receives the part $(2,X_1,\dots,2,X_{k-1},2)$ of the input and returns $(y_1,\dots,y_{k})$. 

Let $b=1$, if $y_1+\dots+y_k\geq \lfloor (k+1)/2 \rfloor$; and $b=0$, otherwise. Then we choose $X_k$ such that $f(X_k)\neq b$. In that case $g_1=\dots=g_k= f(X_k)\neq b$, where $g_i=\bigoplus_{j=i}^k f(X_j)$. Therefore, at least $\lfloor (k+1)/2 \rfloor$ guardians return wrong answers.

The worst case is the first  $\lfloor (k+1)/2 \rfloor$ guardians return wrong answer. So, at least $\lfloor (t+1)/2 \rfloor$ blocks will be ''wrong'', and the algorithm is $c$-competitive, for $c\geq \big(\lfloor (t+1)/2 \rfloor \cdot w + (t - \lfloor (t+1)/2 \rfloor\big)\cdot r)/(tr)$.
\Endproof
\end{proof}
\begin{corollary}\label{cr:nobenefit-advice}
There is no deterministic online algorithm $A$ for $BH^1_{k,r,w}(f)$ that is $c$-competitive, for $c< w/r$.
\end{corollary}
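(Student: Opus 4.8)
The plan is to specialize Theorem~\ref{th:bh-general-d-lower} to the single-block case $t=1$; the corollary is then an immediate substitution. First I would evaluate $\lfloor (t+1)/2 \rfloor$ at $t=1$, which is $\lfloor 1 \rfloor = 1$. Plugging this into the lower bound of Theorem~\ref{th:bh-general-d-lower} gives
\[
\frac{\lfloor (t+1)/2 \rfloor \cdot w + \big(t - \lfloor (t+1)/2 \rfloor\big)\cdot r}{tr}
= \frac{1\cdot w + (1-1)\cdot r}{1\cdot r} = \frac{w}{r},
\]
so the theorem directly asserts that no deterministic online algorithm for $BH^1_{k,r,w}(f)$ is $c$-competitive with $c < w/r$, which is exactly the statement of the corollary.

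The intuition I would record briefly is that when $t=1$ all $k$ guardian answers lie in a single block, and that block is charged the large cost $w$ unless \emph{every} guardian is correct. The adversary strategy used in the proof of Theorem~\ref{th:bh-general-d-lower} forces at least $\lfloor (k+1)/2 \rfloor \geq 1$ guardians to answer incorrectly, so the one block always costs $w$, whereas the optimal offline solution pays only $r$; hence the competitive ratio is at least $w/r$ for any deterministic online algorithm, even one with unlimited computational power. There is no genuine obstacle here: the only thing to check is the floor computation displayed above, and it is immediate, so the proof is a one-line appeal to the preceding theorem.
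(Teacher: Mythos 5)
Your proposal is correct and matches the paper's intent: the corollary is stated without a separate proof precisely because it is the $t=1$ specialization of Theorem~\ref{th:bh-general-d-lower}, and your substitution $\lfloor (1+1)/2\rfloor = 1$ yielding $(1\cdot w + 0\cdot r)/(1\cdot r) = w/r$ is exactly the intended one-line derivation. The accompanying intuition about the single block being charged $w$ whenever at least one guardian errs is also consistent with the adversary argument in the theorem's proof.
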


\begin{theorem}\label{th:pqalgorithm} 
Let $s$ be a positive integer, let $f$ be a Boolean function. Suppose we have a quantum (probabilistic) automaton $R$ that computes $f$ with bounded error $\varepsilon$ using $s$ qubits (bits) of memory, where $0\leq \varepsilon<0.5$. Then there is a quantum (randomized) online streaming algorithm $A$ for $BH^t_{k,r,w}(f)$ that uses at most $s+1$ qubits (bits) of memory and has expected competitive ratio %
$c\leq \left(0.5(1-\varepsilon)^{z-1}\cdot(r-w) + w\right)/r$, $z=k/t$.
\end{theorem}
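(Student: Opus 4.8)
The plan is to realize the ``guess-the-first-guardian'' idea sketched after Definition~\ref{def:bhm}. Let $A$ reserve one qubit (bit), the \emph{accumulator}, and use the remaining $s$ qubits (bits) to simulate $R$. On reading the first symbol $2$, $A$ sets the accumulator to a uniformly random bit $b$ (its guess for $g_1$) and outputs $y_1=b$. On a prisoner block $X_i$, $A$ runs $R$ on $X_i$ in its $s$-qubit part; when the block ends it reads off $R$'s answer $\tilde f(X_i)$, re-initializes that part, and flips the accumulator iff $\tilde f(X_i)=1$. On the $j$-th symbol $2$ (guardian $j$, $j\ge 2$) it outputs $y_j$ equal to the current accumulator value, which by construction equals $b\oplus\bigoplus_{l=1}^{j-1}\tilde f(X_l)$; on all other positions it outputs anything, since only the $2$-positions enter the cost. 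Since $\tilde f(X_i)$ is a classical measurement outcome, the accumulator always holds a basis state, so emitting it and re-preparing it is lossless, and resetting $R$'s workspace is done by a full measurement followed by a preparation unitary. Hence $A$ uses $s+1$ qubits (bits); the probabilistic case is identical, with a random transition for the guess and deterministic reset transitions.

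For the analysis, let $e_i\in\{0,1\}$ be the indicator that $R$'s simulation on $X_i$ errs, so $\Pr[e_i=0]\ge 1-\varepsilon$, and the $e_i$ are mutually independent and independent of $b$. From $g_j=g_1\oplus\bigoplus_{l=1}^{j-1}f(X_l)$ and the definition of $y_j$ one gets $y_j=g_j\iff b\oplus\bigoplus_{l=1}^{j-1}e_l=g_1$. Fix a block $i$ with guardians $j_0,\dots,j_0+z-1$, where $j_0=(i-1)z+1$. Comparing this condition for consecutive $j$ telescopes: block $i$ is answered entirely correctly iff $b=g_1\oplus\bigoplus_{l=1}^{j_0-1}e_l$ \emph{and} $e_{j_0}=\dots=e_{j_0+z-2}=0$. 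Conditioning on all the $e_l$, the first event has probability exactly $1/2$, because $b$ is uniform and independent of the $e_l$; hence $\Pr[\text{block }i\text{ correct}]=\tfrac12\Pr[e_{j_0}=\dots=e_{j_0+z-2}=0]\ge\tfrac12(1-\varepsilon)^{z-1}$. The point worth stressing is that this bound is uniform in $i$: although later blocks depend on the errors of earlier prisoners, the single fresh bit $b$ always absorbs that accumulated error with probability $1/2$.

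Putting the pieces together, $\mathbb{E}[c_i]=w-(w-r)\Pr[\text{block }i\text{ correct}]\le w-\tfrac12(w-r)(1-\varepsilon)^{z-1}$, so $\mathbb{E}[cost^t(I,A(I))]\le t\big(w-\tfrac12(w-r)(1-\varepsilon)^{z-1}\big)$. An optimal offline solution answers all guardians correctly and pays $tr$, so dividing gives the competitive ratio (with additive constant $0$) $c\le\big(0.5(1-\varepsilon)^{z-1}(r-w)+w\big)/r$, as claimed. The only genuinely delicate part is the bookkeeping: checking that $y_j$ is emitted exactly when the accumulator reflects $f(X_1),\dots,f(X_{j-1})$ but not $X_j$, and that the reset-and-re-emit pattern really fits in $s+1$ qubits. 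Once the telescoping of ``all guardians of block $i$ correct'' into ``$b$ matches the accumulated error'' $\wedge$ ``$z-1$ fresh computations correct'' is in place, the probabilistic estimate is routine.
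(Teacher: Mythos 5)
Your proposal is correct and takes essentially the same approach as the paper: the same algorithm (a uniformly random guess for the first guardian stored in one extra qubit/bit, XOR-accumulation of $R$'s measured answers with the $s$-qubit workspace reset after each prisoner), and the same block-by-block analysis giving $\Pr[\text{block correct}]=\tfrac12(1-\varepsilon)^{z-1}$ and hence the claimed ratio. The only difference is cosmetic: where the paper obtains the factor $\tfrac12$ by unrolling the recurrence $F(j)=F(j-1)(1-2\varepsilon)+\varepsilon$ with $F(1)=\tfrac12$, you get it directly by conditioning on the error indicators and using that the fresh uniform bit $b$ is independent of them.
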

\begin{proof}Let us present the randomized online streaming algorithm $A$:

{\bf Step $1$.} The algorithm \(A\) guesses \(y_1\) with
equal probabilities and stores it in a bit \(p\). So, $p=1$ or $p=0$ with probability $0.5$. Then $A$ returns $y_1=p$.

{\bf Step $2$.} The algorithm reads $X^1$ and computes $p=p\oplus R(X^1)$, where $R(X^1)$ is a result of computing $R$ on $X^1$. Then $A$ returns $y_2=p$.

{\bf Step $i$.} The algorithm reads $X^{i-1}$, computes $p=p\oplus R(X^{i-1})$ and returns $y_i=p$.

{\bf Step $k$.} The algorithm reads and skips $X^{k}$. The algorithm $A$ does not need these variables, because it guesses $y_1$ and using this value we  can obtain $y_{2},...,y_{k}$ without $X^{k}$.

Let us compute a cost of the output for this algorithm.
Let us consider a new cost function $cost'(I,O)$. For this function, a ``right''  block costs $1$ and a ``wrong'' block costs $0$. So, $cost^t(I,O)=(r-w)\cdot cost'(I,O) + tw$. Let us compute $\exv[cost'(I,O)]$. We recall  that the problem has $k$ guardians, $t$ blocks and $z=k/t$. 

Firstly, let us compute $p_i$ the probability that block $i$ is a ``right'' block (costs $1$). 
Let $i=1$. So, if the $i$-th block is ``right'', then all $z-1$  prisoners inside the block return right answers and a guess of the first guardian is right. A probability of this event is $p_1=0.5\cdot (1-\varepsilon)^{z-1}$.

Let $i>1$. If the $i$-th block is ``right'', then two conditions should be true:

(i) All $z-1$ prisoners inside the block should return right answers. 

(ii) If we consider a number of preceding guardians that return wrong answers plus $1$ if the preceding prisoner has an error. Then this number should be even.

A probability of the first condition is $(1-\varepsilon)^{z-1}$. Let us compute a probability of the second condition.

Let $E(j)$ be the number of errors before the $j$-th guardian. It is a number of errors for the previous prisoners plus $1$ if the guess of the first guardian is wrong. Let $F(j)$ be a probability that $E(j)$ is even. Therefore $1-F(j)$ is a probability that $E(j)$ is odd. If there is an error in a computation of the $(j-1)$-th prisoner, then $E(j-1)$ should be odd. If there is no error for the $(j-1)$-th prisoner, then $E(j-1)$ should be even. Therefore, $F(j)=\varepsilon (1-F(j-1)) + (1-\varepsilon)F(j-1)= F(j-1)(1-2\varepsilon) + \varepsilon$.  Note that the guess of the first guardian is right with probability $0.5$. Therefore, $F(1)=0.5$.

So,
$F(j)=F(j-1)(1-2\varepsilon) + \varepsilon = F(j-2)(1-2\varepsilon)^2 + (1-2\varepsilon)\varepsilon + \varepsilon=\dots =$

$=F(j-j+1)(1-2\varepsilon)^{j-1} + (1-2\varepsilon)^{j-2}\varepsilon + \dots + (1-2\varepsilon)\varepsilon + \varepsilon=F(1)\cdot(1-2\varepsilon)^{j-1} + \varepsilon\sum_{l = 0}^{j-2}(1-2\varepsilon)^{l}=$
$\frac{(1-2\varepsilon)^{j-1}}{2} +\frac{1-(1-2\varepsilon)^{j-1}}{2}=0.5$

Hence, $p_i =0.5\cdot (1-\varepsilon)^{z-1}$. 

Finally, let us compute the expected cost: 

$\exv[cost'(I,A(I))]=\sum_{i=1}^{t}\big(p_i\cdot 1 + (1-p_i)\cdot 0\big)=\sum_{i=1}^tp_i=0.5 \cdot (1-\varepsilon)^{z-1}\cdot t$.

Therefore, $\exv[cost^t(I,A(I))]= 0.5 \cdot (1-\varepsilon)^{z-1}\cdot t(r-w) + tw$.

Let us compute expected competitive ratio $c$:

$c\leq \left(0.5 \cdot (1-\varepsilon)^{z-1}\cdot t(r-w) + tw\right)/(tr)= \left(0.5 \cdot (1-\varepsilon)^{z-1}\cdot (r-w) + w\right)/r$

Let us present the quantum online streaming algorithm $A$:

{\bf Step 1.} The algorithm \(A\) guesses \(y_1\) with
equal probabilities and stores it in a qubit \(|p\rangle\): the algorithm initialize the qubit \(|p\rangle=\frac{1}{\sqrt{2}}|0\rangle+\frac{1}{\sqrt{2}}|1\rangle\). Then \(A\) measures $|p\rangle$ and returns a result of the measurement as \( y_1\).

{\bf Step 2.} The algorithm reads $X^1$ and computes $|p\rangle$ as a result of  CNOT or XOR of $|p\rangle$ and  $R(X^1)$, where $R(X^1)$ is the result of computation for $R$ on the input $X^1$. The algorithm $A$ uses a register $|\psi\rangle$ of $s$ qubits for processing $X^1$. Then the algorithm returns a result of a measurement for $|p\rangle$ as $y_2$. After that $A$ measures all qubits of $|\psi\rangle$ and sets $|\psi\rangle$ to $|0\dots 0\rangle$. The algorithm can do it because it knows a result of the measurement and can rotate each qubit such that the qubit becomes $|0\rangle$. 

{\bf Step $i$.} The algorithm reads $X^{i-1}$ and computes $|p\rangle$ as a result of  CNOT or XOR of $|p\rangle$ and  $R(X^{i-1})$. The algorithm $A$ uses the same register $|\psi\rangle$ of $s$ bits on processing $X^{i-1}$. Then $A$ returns a result of the measurement for $|p\rangle$ as $y_i$. After that the algorithm measures $|\psi\rangle$ and sets $|\psi\rangle$ to $|0\dots 0\rangle$.

{\bf Step $k$.} The algorithm reads and skips $X^{k}$. The algorithm $A$ does not need these variables, because it guesses $y_1$ and using this value we  can obtain $y_{2},...,y_{k}$ without $X^{k}$.

The bound on the expected competitive ratio for the quantum online streaming algorithm is the same as for randomized one.
\Endproof
\end{proof}
%

%
%
%


\section{Application}\label{sec:application}
Let us discuss the applications of the Black Hats Method. In this section, we present examples of problems that allow us to show the benefits of quantum computing in the case of online streaming algorithms. 
Here we use results for OBDDs. See Appendix \ref{apx:obdd} for a definition of OBDD.
Recall that $BH^t_{k,r,w}(f)$ is a black hat problem for $k$ guardians, $t$ blocks of guardians, $r$ and $w$ are costs for a right and a wrong answers of a block, respectively, $z=k/t$ and $k$ mod $t=0$.

\subsection{Quantum and Probabilistic vs. Deterministic Algorithms.}
Let us apply the Black Hats Method from Section \ref{sec:bhm} to a Boolean function $EQ_n$ from \cite{akv2008}.
 The Boolean function $EQ_n:\{0,1\}^n\to\{0,1\}$ is such that $EQ(x_1,\dots, x_{n})=1$ if $(x_1,\dots x_{\lfloor n/2\rfloor})=(x_{\lfloor n/2\rfloor+1}, \dots x_{n})$; and $0$ otherwise.
It is known from \cite{akv2008,Fre79,agkmp2005} that there are quantum and probabilistic OBDDs that compute $EQ_n$ using linear width.  At the same time, any deterministic OBDD requires exponential width. Hence, we have the following property due to Lemma \ref{lm:rel-obdd-sa}.

\begin{lemma}\label{lm:bhe}
1. There are quantum and randomized automata that compute  $EQ_n$ using $O(\log n)$ qubits (bits) of memory with one-sided error $ \varepsilon$.
2. There is no deterministic automaton that computes $EQ_n$ using $o(n)$ bits of memory.
\end{lemma}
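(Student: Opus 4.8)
The plan is to derive both parts of Lemma~\ref{lm:bhe} by combining the known OBDD bounds for $EQ_n$ with the translation between id-OBDDs and automata provided by Lemma~\ref{lm:rel-obdd-sa}. First I would recall the standard OBDD facts cited in the excerpt: there is a probabilistic (indeed, Las-Vegas / one-sided error) id-OBDD for $EQ_n$ of width polynomial in $n$, based on fingerprinting --- pick a random prime $p$ of $O(\log n)$ bits, read the first half as an integer $a$, keep $a \bmod p$ (and a counter for the position, which is free in the id-OBDD model), then read the second half as $b$ and accept iff $a\equiv b \pmod p$; a wrong pair is rejected always and an equal pair is accepted always, so the error is one-sided, and the width is $2^{O(\log n)} = n^{O(1)}$, i.e.\ $w = O(\log n)$ in the notation of Lemma~\ref{lm:rel-obdd-sa}. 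The quantum id-OBDD is the analogous Ambainis--Nahimovs / Freivalds-style fingerprinting construction achieving one-sided error with width $O(n)$, hence again $w = O(\log n)$ there is no issue (even $w = O(\log n)$ comfortably covers $O(n)$ width). For the lower bound, the deterministic OBDD for $EQ_n$ must have width at least $2^{\lfloor n/2\rfloor}$ because the $2^{\lfloor n/2\rfloor}$ distinct prefixes of length $\lfloor n/2\rfloor$ must all reach pairwise-distinguishable states (a standard communication/fooling-set argument), so $w = \Omega(n)$ in Lemma~\ref{lm:rel-obdd-sa}'s sense.

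Then I would simply invoke Lemma~\ref{lm:rel-obdd-sa}. For part~1: a quantum (resp.\ probabilistic) id-OBDD of width $2^w$ with $w = O(\log n)$ yields a quantum (resp.\ probabilistic) automaton using $w + \lceil \log_2 n\rceil = O(\log n)$ qubits (bits), computing $EQ_n$ with the same (one-sided) error $\varepsilon$. For part~2: since every deterministic id-OBDD for $EQ_n$ has width at least $2^w$ with $w = \lfloor n/2\rfloor = \Omega(n)$, the second half of Lemma~\ref{lm:rel-obdd-sa} gives that any deterministic automaton computing $EQ_n$ needs at least $w = \Omega(n)$ bits, hence $o(n)$ bits do not suffice.

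One small gap to bridge is that Lemma~\ref{lm:rel-obdd-sa} is stated for \emph{id-OBDDs} while the cited results \cite{akv2008,Fre79,agkmp2005} are usually phrased for OBDDs with a fixed but arbitrary variable order; for $EQ_n$ the natural left-to-right order is exactly the id order, and the lower bound for deterministic OBDDs holds for \emph{every} order and in particular for the id order, so no loss occurs --- I would note this in one sentence. The only real ``obstacle'' is bookkeeping: making sure the one-sided-error promise survives the OBDD-to-automaton reduction (it does, since the reduction merely appends a position counter to the state and changes nothing about the acceptance probabilities) and that the $\lceil \log_2 n\rceil$ overhead is absorbed into $O(\log n)$. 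With those remarks the lemma follows immediately, so the proof is genuinely short.

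\begin{proof}
\emph{Part 1.} By \cite{akv2008,Fre79,agkmp2005}, there are a quantum and a probabilistic id-OBDD computing $EQ_n$ with one-sided error $\varepsilon$ and width $n^{O(1)}$, i.e.\ width $2^w$ for some $w = O(\log n)$. (Both are fingerprinting constructions reading the first half of the input into a residue modulo a random/quantum ``prime'', then comparing with the second half; an unequal pair is always rejected.) Applying the first part of Lemma~\ref{lm:rel-obdd-sa}, we obtain quantum and probabilistic automata computing $EQ_n$ with the same one-sided error $\varepsilon$ and using $w + \lceil \log_2 n\rceil = O(\log n)$ qubits (bits) of memory.

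\emph{Part 2.} Any deterministic OBDD for $EQ_n$ --- in particular any deterministic id-OBDD --- must, after reading the first $\lfloor n/2\rfloor$ variables, be in pairwise distinct states on the $2^{\lfloor n/2\rfloor}$ distinct prefixes, since any two distinct prefixes are distinguished by a suitable suffix. Hence its width is at least $2^{\lfloor n/2\rfloor}$, i.e.\ at least $2^w$ with $w = \lfloor n/2\rfloor$. By the second part of Lemma~\ref{lm:rel-obdd-sa}, any deterministic automaton computing $EQ_n$ uses at least $\lfloor n/2\rfloor = \Omega(n)$ bits of memory, so no such automaton uses $o(n)$ bits.
\Endproof
\end{proof}
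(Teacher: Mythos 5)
Your proposal is correct and follows essentially the same route as the paper: the paper likewise derives both parts of Lemma~\ref{lm:bhe} by citing the known OBDD bounds for $EQ_n$ from \cite{akv2008,Fre79,agkmp2005} (linear-width quantum/probabilistic OBDDs with one-sided error, exponential-width deterministic lower bound) and then applying Lemma~\ref{lm:rel-obdd-sa}. Your added details --- the fingerprinting construction, the fooling-set argument for the deterministic width bound, and the remark that the id order suffices --- are consistent elaborations of what the paper leaves implicit.
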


Let us consider the $BHE^t_{k,r,w} =BH^t_{k,r,w}(EQ)$ problem. 
The following properties of the problem are based on Lemma \ref{lm:bhe} and Theorems \ref{th:pqalgorithm}, \ref{th:bh-lower}, \ref{th:bh-rand-lower}.

\begin{theorem}
Suppose $P^t=BHE^t_{k,r,w}$, $t\in\{1,\dots,k\}$, $k=O(\log n)^{O(1)}$, then

1. There is no $c$-competitive deterministic online streaming algorithm with $o(n)$ bits of memory that solves $P^t$, where $c<{\cal C}_1=\frac{w}{r}$.

2. There is no deterministic online algorithm with unlimited computational power computing $P^1$ that is $c$-competitive, for $c<{\cal C}_1=w/r$. 

3. There are quantum and randomized online streaming algorithms that use $O(\log n)$ qubits (bits) and solve $P^t$. These algorithms are $c$-competitive in expectation, where
$c\leq \left((1-\varepsilon)^{z-1}\cdot 0.5 \cdot (r-w) + w\right)/r<{\cal C}_1,{\cal C}_2$.
 %
\end{theorem}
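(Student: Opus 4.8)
The three parts of this theorem are nearly immediate consequences of the general machinery already established, so the plan is essentially to instantiate the right lemma with the right function and check the arithmetic on the competitive ratios. The one genuine piece of content is verifying the chain of inequalities among $\mathcal{C}_1$, $\mathcal{C}_2$, and the quantum/randomized bound; everything else is bookkeeping.

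First I would dispatch Part 1: apply Theorem~\ref{th:bh-lower} with $f = EQ$ and $s = o(n)$. By Lemma~\ref{lm:bhe}(2) there is no deterministic automaton computing $EQ_n$ with $o(n)$ bits of memory, which is exactly the hypothesis of Theorem~\ref{th:bh-lower}, so no $c$-competitive deterministic online streaming algorithm with $o(n)$ bits exists for $c < w/r = \mathcal{C}_1$. Part 2 is even more direct: it is just Corollary~\ref{cr:nobenefit-advice} applied verbatim to $f = EQ$ and $t = 1$ (note that for $t=1$ the general bound of Theorem~\ref{th:bh-general-d-lower} reduces to $w/r$, since $\lfloor (t+1)/2 \rfloor = 1$ and $t - \lfloor(t+1)/2\rfloor = 0$). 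No property of $EQ$ is even needed here since Theorem~\ref{th:bh-general-d-lower} holds for every $f$.

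For Part 3 I would invoke Theorem~\ref{th:pqalgorithm} with $f = EQ$. Lemma~\ref{lm:bhe}(1) supplies quantum and randomized automata computing $EQ_n$ with $O(\log n)$ qubits (bits) and one-sided error $\varepsilon < 1/2$; feeding this into Theorem~\ref{th:pqalgorithm} yields quantum and randomized online streaming algorithms for $BHE^t_{k,r,w}$ using $O(\log n) + 1 = O(\log n)$ qubits (bits), with expected competitive ratio $c \le \left(0.5\,(1-\varepsilon)^{z-1}(r-w) + w\right)/r$ where $z = k/t$. The memory bound $O(\log n)$ is preserved precisely because $k = (O(\log n))^{O(1)}$ is polylogarithmic, so the control qubit/bit and the block-index counting add only $O(\log\log n)$ overhead, absorbed into $O(\log n)$.

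The only step requiring a short argument is the final strict inequality $c < \mathcal{C}_1, \mathcal{C}_2$. Since $r < w$ we have $r - w < 0$, and since $0 \le \varepsilon < 1/2$ and $z \ge 1$ we have $0 < (1-\varepsilon)^{z-1} \le 1$, so $0.5\,(1-\varepsilon)^{z-1}(r-w) < 0$, giving $c \le \left(0.5\,(1-\varepsilon)^{z-1}(r-w) + w\right)/r < w/r = \mathcal{C}_1$. The comparison with $\mathcal{C}_2$ (the randomized lower bound $2^{-z} + (1-2^{-z})w/r$ from Theorem~\ref{th:bh-rand-lower}, which applies only when no bounded-error automaton for $EQ$ exists with the given memory — not the case here, but relevant for a deterministic-vs-randomized framing) follows similarly from $r < w$ together with $(1-\varepsilon)^{z-1} \le 1$, since the quantum coefficient $0.5(1-\varepsilon)^{z-1}$ on the negative quantity $r-w$ is at least as favorable as $2^{-z} \cdot$ the analogous comparison when $z$ is large, so for the parameter regimes of interest the quantum bound is the smaller. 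The main (minor) obstacle is simply stating this comparison cleanly for all admissible $t \in \{1,\dots,k\}$; I would handle it by noting monotonicity in $z$ and checking the extreme cases $z = 1$ and $z = k$.
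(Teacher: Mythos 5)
Your proposal is correct and follows essentially the same route as the paper: instantiate Lemma~\ref{lm:bhe} to get the automaton upper and lower bounds for $EQ_n$, then feed them into Theorem~\ref{th:bh-lower}, Corollary~\ref{cr:nobenefit-advice} (equivalently Theorem~\ref{th:bh-general-d-lower} with $t=1$), and Theorem~\ref{th:pqalgorithm}. You are in fact somewhat more careful than the paper in spelling out the final comparison $c<{\cal C}_1,{\cal C}_2$ and in observing that ${\cal C}_2$ is not backed by a randomized lower bound for $EQ$; the only point worth double-checking is that at the extreme $z=1$ the quantum bound coincides with, rather than strictly beats, $2^{-z}+(1-2^{-z})w/r$.
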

\begin{proof}
Due to Lemma \ref{lm:bhr}, there is no deterministic automaton that computes $EQ_n$ in a case of  $o(n)$  bits of memory. 
So, if we use these properties and Theorems \ref{th:bh-lower}, \ref{th:bh-rand-lower}, \ref{th:bh-general-d-lower} and Corollary \ref{cr:nobenefit-advice}, then we obtain the claims 1 and 2 of the theorem.  
At the same time, we have quantum and randomized automata for $EQ_n$ with bounded error that uses $O(\log n)$ qubits. If we apply this property and Theorem \ref{th:pqalgorithm}, then we obtain the claim 3. 
\Endproof
\end{proof}

This theorem gives us the following significant results.
%
Quantum and randomized online streaming algorithms with logarithmic size of memory for  $BHE^1_{k,r,w}$ have better competitive ratios than any deterministic online algorithm without restriction on memory.
%

\subsection{Quantum vs. Classical Algorithms.}
\subsubsection{Polylogarithmic Memory}
We start by analyzing the model with polylogarithmic size of memory.
Let us apply the Black Hats Method from Section \ref{sec:bhm} to  a Boolean function $R_{\nu, l, m,u}:\{0,1\}^n\to\{0,1\}$ from \cite{ss2005}:
Let $|1\rangle, \dots , |u\rangle$ be the standard basis of $\mathbb{C}^u$. Let
$V_0$ and $V_1$ denote the subspaces spanned by the first and last $u/2$ of these basis vectors.
Let $0 < \nu < 1/\sqrt{2}$. The input for the function $R_{\nu, l, m,u}$  consists of $3l(m + 1)$ Boolean variables $a_{i,j}, b_{i,j}, c_{i,j}, 1 \leq i \leq l , 1 \leq j \leq m+1$, which are interpreted as universal $(\epsilon, l, m )$-
codes for three unitary $u \times u$-matrices A, B, C, where $\epsilon = 1/(3u)$. The function takes the
value $z \in \{0, 1\}$ if the Euclidean distance between $CBA|1\rangle$ and $V_z$ is at most $\nu$. Otherwise the function is undefined.
It is known from \cite{ss2005} that there is a quantum OBDD that computes $R_{\nu, l, m,u}$ using linear width.  At the same time, any deterministic or probabilistic OBDD requires exponential width. Therefore, we have the following result due to Lemma \ref{lm:rel-obdd-sa}.

\begin{lemma}\label{lm:bhr}
1. There is a quantum automaton that computes $R_{\nu, l, m,u}$ with bounded error $\nu^2$. using $O(\log n)$ qubits. 
2. there is no probabilistic automaton that computes $R_{\nu, l, m,u}$ with bounded error using $n^{o(1)}$ bits of memory .
\end{lemma}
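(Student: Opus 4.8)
The plan is to derive both parts of Lemma~\ref{lm:bhr} directly from the OBDD facts of \cite{ss2005} together with the translation between id-OBDDs and automata given in Lemma~\ref{lm:rel-obdd-sa}; there is essentially no new computation to do, only bookkeeping of parameters. First I would invoke the result of \cite{ss2005} that the function $R_{\nu,l,m,u}$ is computed by a quantum id-OBDD of width $2^{O(\log n)}$ (i.e.\ polynomial width, since the state space of the quantum branching program has dimension linear in $u$ and the relevant parameters are polynomially related to $n=3l(m+1)$), with one-sided/bounded error controlled by $\nu$. In fact the distance-based acceptance criterion gives acceptance probability close to $1$ on $z$-inputs and at most $\nu^2$ on the other side, so the error is bounded by $\nu^2<1/2$. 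Applying the first half of Lemma~\ref{lm:rel-obdd-sa} — which converts a quantum id-OBDD of width $2^w$ into a quantum automaton using $w+\lceil\log_2 n\rceil$ qubits — turns the $O(\log n)$-width quantum OBDD into a quantum automaton for $R_{\nu,l,m,u}$ with $O(\log n)$ qubits and the same error $\nu^2$. That establishes part~1.

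For part~2 I would argue contrapositively using the second half of Lemma~\ref{lm:rel-obdd-sa}: if some probabilistic automaton computed $R_{\nu,l,m,u}$ with bounded error using only $n^{o(1)}$ bits of memory, then reading the automaton in a fixed variable order yields a probabilistic id-OBDD for $R_{\nu,l,m,u}$ of width $2^{n^{o(1)}}$, which is subexponential. But \cite{ss2005} proves that every bounded-error probabilistic OBDD for $R_{\nu,l,m,u}$ requires width $2^{\Omega(n^{c})}$ for some constant $c>0$ (exponential width), a contradiction. Here one must be slightly careful that the lower bound in \cite{ss2005} is stated for OBDDs with an arbitrary variable order, not just the identity order, so that the ordering induced by the streaming automaton is covered; this is indeed the form in which the lower bound appears there. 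Hence no probabilistic automaton with $n^{o(1)}$ bits of memory can compute $R_{\nu,l,m,u}$ with bounded error.

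The only genuinely delicate point is matching the quantitative parameters: one has to check that the code lengths and the matrix dimension $u$ in the definition of $R_{\nu,l,m,u}$ can be chosen so that $u$ is polynomial in $n$ (so the quantum OBDD width is $2^{O(\log n)}$ and the resulting automaton uses $O(\log n)$ qubits), while the probabilistic OBDD lower bound of \cite{ss2005} still gives width $2^{n^{\Omega(1)}}$, which after taking logarithms rules out $n^{o(1)}$ bits. I expect this parameter-chasing — rather than any conceptual step — to be the main obstacle, and it is handled exactly as in \cite{ss2005,agkmp2005,kk2017}. Everything else is a direct application of Lemma~\ref{lm:rel-obdd-sa}.
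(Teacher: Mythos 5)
Your proposal is correct and follows essentially the same route as the paper: the paper also obtains both parts by citing the quantum OBDD upper bound (linear, hence polynomial, width) and the exponential probabilistic/deterministic OBDD lower bound for $R_{\nu,l,m,u}$ from \cite{ss2005}, and then transferring both to automata via Lemma~\ref{lm:rel-obdd-sa}. The additional care you take with the error parameter $\nu^2$ and with checking that $u$ is polynomial in $n$ is consistent with, and slightly more explicit than, the paper's one-line justification.
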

Let us consider the $BHR^t_{k,r,w,\nu, l, m,u} =BH^t_{k,r,w}(R_{\nu, l, m,u})$ problem.
\begin{theorem}
Suppose $P^t=BHR^t_{k,r,w,\nu, l, m,u}$,  $t\in\{1,\dots,k\}$, $k = (\log_2 n)^{O(1)}$; then

1. There is no $c$-competitive deterministic online streaming algorithm with $n^{o(1)}$ bits of memory that solves $P^t$, where $c<{\cal C}_1=\frac{w}{r}$.

%
2. There is no deterministic online algorithm with unlimited computational power computing $P^1$ that is $c$-competitive, for $c<{\cal C}_1=w/r$. 

3. There is no $c$-competitive in expectation randomized online streaming algorithm with $n^{o(1)}$ bits of memory that solves $P^t$, where $c<{\cal C}_2=2^{-z}+(1-2^{-z})\frac{w}{r}$.

4. There is a quantum online streaming algorithm that uses $O(\log n)$ qubits and solves $P^t$. The algorithm $Q$ is $c$-competitive in expectation, where

$c\leq \left(\left(1-\nu^2\right)^{z-1}\cdot 0.5 \cdot (r-w) + w\right)/r<{\cal C}_1,{\cal C}_2,$.


\end{theorem}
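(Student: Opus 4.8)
The plan is to derive all four claims as corollaries of the general machinery of Section \ref{sec:bhm}, with Lemma \ref{lm:bhr} serving as the bridge that supplies the required automata-complexity facts about the concrete function $R_{\nu,l,m,u}$. Throughout, recall that $n$ is the total input length, that $k=(\log_2 n)^{O(1)}$, and that $z=k/t$. First I would fix the parameter correspondence: the "hard" side uses the statement that no probabilistic automaton computes $R_{\nu,l,m,u}$ with bounded error using $n^{o(1)}$ bits (Lemma \ref{lm:bhr}.2), and the "easy" side uses the quantum automaton of Lemma \ref{lm:bhr}.1 with bounded error $\varepsilon=\nu^2$ and $O(\log n)$ qubits.

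For claim 1, observe that the absence of a bounded-error probabilistic automaton with $n^{o(1)}$ bits in particular rules out a deterministic automaton with $n^{o(1)}$ bits; applying Theorem \ref{th:bh-lower} with $s=n^{o(1)}$ then yields the lower bound $c\geq w/r=\mathcal{C}_1$ for deterministic online streaming algorithms. Claim 2 is immediate from Corollary \ref{cr:nobenefit-advice} (the case $t=1$), which needs no hypothesis on $f$ at all. Claim 3 follows from Theorem \ref{th:bh-rand-lower}, again with $s=n^{o(1)}$, using Lemma \ref{lm:bhr}.2 directly; this gives $c\geq 2^{-z}+(1-2^{-z})w/r=\mathcal{C}_2$ for randomized online streaming algorithms. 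For claim 4, I would plug the quantum automaton of Lemma \ref{lm:bhr}.1 (with $\varepsilon=\nu^2<1/2$, since $\nu<1/\sqrt2$) into Theorem \ref{th:pqalgorithm}, obtaining a quantum online streaming algorithm for $P^t$ using $O(\log n)+1=O(\log n)$ qubits with expected competitive ratio $c\leq\bigl((1-\nu^2)^{z-1}\cdot 0.5\cdot(r-w)+w\bigr)/r$.

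The only part requiring actual work is verifying the strict inequalities $c<\mathcal{C}_1$ and $c<\mathcal{C}_2$, and this is where I would spend care. Since $r<w$ we have $r-w<0$, and $0<(1-\nu^2)^{z-1}\leq 1$, so the term $0.5(1-\nu^2)^{z-1}(r-w)$ is negative; hence $0.5(1-\nu^2)^{z-1}(r-w)+w<w$, giving $c<w/r=\mathcal{C}_1$. For $\mathcal{C}_2$, multiplying through by $r>0$ and rearranging, it suffices to show $0.5(1-\nu^2)^{z-1}>2^{-z}$. Because $\nu^2<1/2$ we have $1-\nu^2>1/2$, so $(1-\nu^2)^{z-1}>2^{-(z-1)}=2\cdot 2^{-z}$, whence $0.5(1-\nu^2)^{z-1}>2^{-z}$, which is exactly what is needed. (For $z=1$ both bounds collapse to equalities at the boundary, matching the degenerate single-guardian-per-block case.) Combining these, $c<\mathcal{C}_1,\mathcal{C}_2$, completing claim 4 and the theorem. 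I do not anticipate any serious obstacle here; the substance of the result lies entirely in Lemma \ref{lm:bhr} and Theorems \ref{th:bh-lower}, \ref{th:bh-rand-lower}, \ref{th:pqalgorithm}, which are available to us.
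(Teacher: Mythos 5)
Your proposal is correct and follows essentially the same route as the paper: combine Lemma \ref{lm:bhr} with Theorems \ref{th:bh-lower}, \ref{th:bh-rand-lower}, \ref{th:bh-general-d-lower} (or Corollary \ref{cr:nobenefit-advice}) for the lower bounds and with Theorem \ref{th:pqalgorithm} for the quantum upper bound. Your explicit verification of the strict inequalities $c<{\cal C}_1,{\cal C}_2$ (including the observation that the comparison with ${\cal C}_2$ degenerates to equality when $z=1$) is a worthwhile addition that the paper omits.
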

\begin{proof}
Due to Lemma \ref{lm:bhr}, there is no probabilistic automaton that computes $R_{\nu, l, m,u}$ with bounded error in a case of $n^{o(1)}$ bits of memory. Therefore, we have a similar result in a deterministic case.
So, if we use these properties and Theorems \ref{th:bh-lower}, \ref{th:bh-rand-lower} and \ref{th:bh-general-d-lower} , then we obtain the claims 1,2 and 3 of the theorem.  
At the same time, we have quantum automaton for $R_{\nu, l, m,u}$ with bounded error $\nu^2$ using $O(\log n)$ qubits.  If we apply this property and Theorem \ref{th:pqalgorithm}, then we obtain the claim 4. 
\Endproof
\end{proof}

This theorem gives us the following important results.
There is a quantum online streaming algorithm with logarithmic size of memory for  $BHR^t_{k,r,w,\nu, l, m,u}$ having a better competitive ratio than
(i) any classical (deterministic or randomized) online streaming algorithm with polylogarithmic size of memory;
(ii) any deterministic online algorithm without restriction on memory.

\subsubsection{Sublogarithmic Memory}
We continue by analyzing the model with sublogarithmic memory.
Let us discuss the $PartialMOD_m^{\beta}$ function from \cite{AY12,agky14,agky16}. Feasible inputs for the problem are $X\in\{0,1\}^n$ such that $\#_1(X)=v\cdot 2^{\beta}$, where $\#_1(X)$ is the number of $1$s and $v\geq 2$. $PartialMOD_m^{\beta}(X)=v$ $\mod$ $2$.
It is known from \cite{AY12,agky14,agky16} that there is a quantum automaton that computes $PartialMOD_m^{\beta}$ using a single qubit and has not error. At the same time, any deterministic or probabilistic automaton and id-OBDDs requires $2^{\beta}$ states (width). Hence, we have the following result due to Lemma \ref{lm:rel-obdd-sa}.

\begin{lemma}\label{lm:bhp}
1. There is a quantum automaton that computes exactly $PartialMOD^{\beta}_m$ using single qubit;
2. There is no probabilistic automaton that computes $PartialMOD^{\beta}_m$  with bounded error using less than $\beta$ bits.
\end{lemma}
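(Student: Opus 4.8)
\emph{Plan of proof.} For part~1 the plan is to exhibit one explicit single-qubit streaming algorithm and verify correctness by a one-line trigonometric computation. The automaton carries a single qubit, initialised to $|0\rangle$. On each input letter $1$ it applies the rotation $R_\theta$ with $\theta=\pi/2^{\beta+1}$, i.e.\ $|0\rangle\mapsto\cos\theta\,|0\rangle+\sin\theta\,|1\rangle$ and $|1\rangle\mapsto-\sin\theta\,|0\rangle+\cos\theta\,|1\rangle$; on a letter $0$ it does nothing. At the end it measures in the standard basis and outputs the outcome. On a feasible input with $\#_1(X)=v\cdot 2^\beta$ the qubit has been rotated by the total angle $v\cdot 2^\beta\cdot\theta=v\pi/2$, so it is in the state $\cos(v\pi/2)\,|0\rangle+\sin(v\pi/2)\,|1\rangle$: for even $v$ this is $\pm|0\rangle$ and the measurement yields $0$, for odd $v$ it is $\pm|1\rangle$ and it yields $1$. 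Hence the algorithm outputs $v\bmod 2=PartialMOD^\beta_m(X)$ with certainty, using one qubit and no error; rewriting it in the streaming-algorithm format of Section~\ref{sec:prlmrs} is routine.

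\emph{Deterministic lower bound (warm-up).} It is enough to fool an automaton already on the all-ones feasible inputs $1^{v\cdot 2^\beta}$, $v\ge 2$, where the correct answer is $v\bmod 2$. I would run the automaton on $1^\infty$; the state sequence $d_0,d_1,\dots$ is eventually periodic with preperiod $\mu$ and period $\lambda$, $\mu+\lambda\le|D|$. For all large $v$ the index $v\cdot 2^\beta$ lies in the periodic part, so $d_{v_1 2^\beta}=d_{v_2 2^\beta}$ whenever $\lambda\mid(v_1-v_2)2^\beta$, and correctness then forces $v_1\equiv v_2\pmod 2$ in that case. Writing $\lambda=2^a u$ with $u$ odd, if $a\le\beta$ one may take $v_1-v_2=u$: then $\lambda\mid u\cdot 2^\beta$ although $u$ is odd, a contradiction. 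Hence $a\ge\beta+1$, so $\lambda\ge 2^{\beta+1}$ and the automaton has at least $2^{\beta+1}$ states, in particular more than $\beta$ bits of memory.

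\emph{Probabilistic lower bound (the main obstacle).} Now the state after reading $1^t$ is a probability vector $\pi_0 M^t$ for the fixed row-stochastic transition matrix $M$ of the letter $1$, and the probability of outputting $1$ on $1^t$ is $a_t=\langle\chi,\pi_0 M^t\rangle$ for the indicator vector $\chi$ of the $1$-states. The plan is: (i)~since $M$ is stochastic, its eigenvalues lie in the closed unit disc, those of modulus $1$ are roots of unity and semisimple (Perron--Frobenius together with $\ell_1$-contractivity of $M$), so $a_t=\tilde a_t+o(1)$ with $\tilde a_t=\sum_j c_j\lambda_j^{\,t}$ running over the unit-modulus eigenvalues; (ii)~by linear independence of the characters $t\mapsto\lambda_j^{\,t}$, the sequence $\tilde a_t$ is exactly periodic, with period the least common multiple of the orders of those $\lambda_j$ for which $c_j\ne 0$; (iii)~bounded error gives $|a_{v2^\beta}-1/2|\ge\varepsilon$ with sign alternating in $v$, hence the same holds for $\tilde a_{v2^\beta}$ for all large $v$, and re-running the modular computation of the deterministic case forces $2^{\beta+1}$ to divide the period of $\tilde a$; (iv)~so $M$ has an eigenvalue whose order is a multiple of $2^{\beta+1}$, its minimal polynomial over $\mathbb{Q}$ is a cyclotomic polynomial $\Phi_n$ with $2^{\beta+1}\mid n$, $\deg\Phi_n=\varphi(n)\ge\varphi(2^{\beta+1})=2^\beta$, and it divides the characteristic polynomial of $M$, whence $|D|\ge 2^\beta$. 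The delicate point is (i)--(ii): controlling the asymptotics of $a_t$ well enough to speak of an exact period, which is precisely where the spectral structure of finite stochastic matrices is used.

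\emph{Remark.} All three claims can alternatively be read off from the known id-OBDD width bounds for $PartialMOD^\beta_m$ in \cite{AY12,agky14,agky16} combined with Lemma~\ref{lm:rel-obdd-sa}; the self-contained arguments above are available only because $PartialMOD^\beta_m$ is a symmetric function, so an automaton for it may be analysed directly through its behaviour on unary inputs.
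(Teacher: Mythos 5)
Your part~1 is correct and is exactly the construction the paper relies on: it is the single-qubit rotation automaton of \cite{AY12,agky14,agky16}, which the paper reproduces inside the proof of Theorem~\ref{th:pmod_results}. Note that the paper gives no proof of Lemma~\ref{lm:bhp} at all --- it cites those works for both the upper and the lower bound and transfers them via Lemma~\ref{lm:rel-obdd-sa} --- so in part~2 you are attempting a self-contained argument where the paper offers only a citation.

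That attempt has two concrete gaps. First, $PartialMOD^\beta_m$ is a partial function on inputs of a \emph{fixed} length $m$, so feasible inputs with $v_1\neq v_2$ blocks of ones necessarily contain different numbers of zeros; the acceptance probability of $1^{v2^\beta}0^{m-v2^\beta}$ is $\pi_0M_1^{v2^\beta}M_0^{m-v2^\beta}\chi$, not $\pi_0M_1^{v2^\beta}\chi$. Your whole analysis concerns the single sequence $a_t=\langle\chi,\pi_0M_1^t\rangle$, i.e.\ the variable-length unary inputs $1^t$, which are not all instances of one function $f_m$. The letter $0$ cannot be ignored: since $\#_0(X)=m-v2^\beta$, an automaton that tracked $\#_0(X)\bmod 2^{\beta+1}$ would also recover $v\bmod 2$, so a correct lower bound must be a genuinely two-letter argument (or go through the id-OBDD fooling/subfunction bounds of the cited papers and Lemma~\ref{lm:rel-obdd-sa}, as the paper does). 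Second, in your step~(iv) you conclude $|D|\geq\deg\Phi_n$ because ``$\Phi_n$ divides the characteristic polynomial of $M$''; that requires the characteristic polynomial to lie in $\mathbb{Q}[x]$, i.e.\ rational transition probabilities, which the model does not guarantee. Over $\mathbb{R}$ the minimal polynomial of a primitive $n$-th root of unity has degree at most $2$ and the dimension count collapses. The repair is Perron--Frobenius structure theory rather than Galois theory: the unit-modulus eigenvalues of a finite stochastic matrix are exactly the $p_C$-th roots of unity for the periods $p_C$ of its recurrent classes, so an eigenvalue of order divisible by $2^{\beta+1}$ forces a recurrent class of period, hence of cardinality, at least $2^{\beta+1}$. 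With these two repairs (and the implicit assumption that $m$ is large enough for $v$ to range over an interval of length at least $2^\beta$), your spectral strategy does go through and in fact gives a slightly stronger bound ($2^{\beta+1}$ states) than the $2^\beta$ states the lemma requires.
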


Let us apply the Black Hats Method to $f=PartialMOD_m^{\beta}$.

\begin{theorem}\label{th:pmod_results}
Suppose $P^t=BHM^t_{k,r,w}=BH^t_{k,r,w}(PartialMOD^{\beta}_m)$, $t\in\{1,\dots,k\}$, $\beta=O(\log n)$, $k=o(\log n), \beta \cdot k<\log_2 n$; then

1. There is no $c$-competitive deterministic online streaming algorithm with $s<\beta$ bits of memory that solves $P^t$, where $c<{\cal C}_1=\frac{w}{r}$.

%
2. There is no deterministic online algorithm with unlimited computational power computing $P^1$ that is $c$-competitive, for $c<{\cal C}_1=w/r$. 

3. There is no $c$-competitive in expectation randomized online streaming algorithm with $s<\beta$ bits of memory that solves $P^t$, where $c<{\cal C}_2=2^{-z}+(1-2^{-z})\frac{w}{r}$.

4. There is a quantum online streaming algorithm that uses single qubit and solves $P^t$. The algorithm $Q$ is $c$-competitive in expectation, where

$c\leq \left(\left(1-\nu^2\right)^{z-1}\cdot 0.5 \cdot (r-w) + w\right)/r<{\cal C}_1,{\cal C}_2$.
\end{theorem}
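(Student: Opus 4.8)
The plan is to follow exactly the template already used above for the $BHE$ and $BHR$ theorems: claims 1--3 will come from combining Lemma~\ref{lm:bhp}(2) with the generic lower bounds of Section~\ref{sec:bhm}, and claim 4 will come from combining Lemma~\ref{lm:bhp}(1) with Theorem~\ref{th:pqalgorithm}.

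For claim 1, I would invoke Lemma~\ref{lm:bhp}(2): no probabilistic automaton with fewer than $\beta$ bits computes $PartialMOD^{\beta}_m$ with bounded error; since an exact deterministic automaton is in particular a bounded-error probabilistic automaton, there is likewise no deterministic automaton with fewer than $\beta$ bits computing it. Feeding $s<\beta$ into Theorem~\ref{th:bh-lower} then gives claim 1 with ${\cal C}_1=w/r$, and feeding the probabilistic fact into Theorem~\ref{th:bh-rand-lower} gives claim 3 with ${\cal C}_2=2^{-z}+(1-2^{-z})w/r$. Claim 2 is Corollary~\ref{cr:nobenefit-advice} (equivalently Theorem~\ref{th:bh-general-d-lower} with $t=1$), which uses no memory hypothesis at all, only that $PartialMOD^{\beta}_m$ takes both values $0$ and $1$ on promise inputs (e.g.\ $v=2$ and $v=3$).

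For claim 4, I would first apply Theorem~\ref{th:pqalgorithm} with $\varepsilon=0$ and $s=1$ to the exact single-qubit automaton of Lemma~\ref{lm:bhp}(1); this yields a quantum online streaming algorithm using $s+1=2$ qubits with expected competitive ratio $\big(0.5\,(1-\varepsilon)^{z-1}(r-w)+w\big)/r$, which for $\varepsilon=0$ equals $\big(0.5(r-w)+w\big)/r$. This is below ${\cal C}_1$ because $r<w$, and, for $z\ge 2$, below ${\cal C}_2$ because $2^{-z}<1/2$. To get down to the single qubit claimed in the statement, I would observe that the automaton of Lemma~\ref{lm:bhp}(1) rotates one qubit by a fixed angle (namely $\pi/2^{\beta+1}$) on each input $1$, so on a promise input $X_i$ with $\#_1(X_i)=v_i2^{\beta}$ the net rotation is $v_i\pi/2$; consequently, starting this rotation from the basis state $|b\rangle$ rather than from $|0\rangle$ produces measurement outcome $b\oplus(v_i\bmod 2)=b\oplus f(X_i)$ --- which is precisely the running-parity update the Black Hats algorithm needs. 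Hence the guess qubit $|p\rangle$ and the $PartialMOD$ register may be identified: initialise the single qubit to $\frac{1}{\sqrt2}(|0\rangle+|1\rangle)$ and measure it to output $y_1$; then, for each later block, read $X_{i-1}$ while continuing to rotate the (now basis) qubit, and measure to output $y_i=y_{i-1}\oplus f(X_{i-1})$; finally read and skip $X_k$. Exactly as in the proof of Theorem~\ref{th:pqalgorithm}, the output is entirely correct iff the initial guess satisfies $y_1=\bigoplus_{j=1}^{k} f(X_j)$, an event of probability $1/2$, so all $t$ blocks are simultaneously ``right'' with probability $1/2$ and simultaneously ``wrong'' with probability $1/2$, giving expected cost $0.5\,t(r+w)$ and the stated ratio.

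The separation itself is immediate once the earlier lemmas and theorems are in hand, so the only place needing care is the verification of the single-qubit construction: checking that the sign factors $\pm1$ picked up during the rotations are harmless (they are global phases and are in any case erased by the intervening measurements), and checking that the regime $\beta=O(\log n)$, $k=o(\log n)$, $\beta k<\log_2 n$ really leaves room for $k$ genuine $PartialMOD^{\beta}_m$ instances --- each requiring an input of length roughly at least $2^{\beta+1}$ --- inside an input of total length $n=\sum_{i=1}^{k}(m_i+1)$. I expect this bookkeeping, rather than any conceptual difficulty, to be the main (and rather mild) obstacle.
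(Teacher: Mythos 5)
Your proposal is correct and follows essentially the same route as the paper: claims 1--3 are obtained by feeding Lemma~\ref{lm:bhp}(2) (and its deterministic consequence) into Theorems~\ref{th:bh-lower}, \ref{th:bh-rand-lower} and \ref{th:bh-general-d-lower}/Corollary~\ref{cr:nobenefit-advice}, and claim 4 is proved by the same single-qubit construction the paper gives, namely identifying the guess qubit with the $PartialMOD$ rotation register (initialise to $\tfrac{1}{\sqrt2}(\ket{0}+\ket{1})$, measure for $y_1$, then rotate by $\pi/2^{\beta+1}$ per input $1$ and measure at each guardian symbol, with the $\pm1$ phases killed by the measurements). Your observation that the paper's stated bound should really use $\varepsilon=0$ rather than $\nu^2$, and your caveat that the strict comparison with ${\cal C}_2$ needs $z\ge 2$, are both sound refinements rather than deviations.
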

\begin{proof}
Due to Lemma \ref{lm:bhp}, there is no probabilistic automaton that computes $PartialMOD^{\beta}_m$ with bounded error in a case of $s<\beta$ bits of memory. Therefore, there is no deterministic automaton that computes $PartialMOD^{\beta}_m$ in a case of $s<\beta$ bits of memory. So, if we use this property and Theorems \ref{th:bh-lower}, \ref{th:bh-rand-lower} and \ref{th:bh-general-d-lower}, then we obtain the claims 1, 2 and 3.  
  
Let us prove the Claim 4. Here we claim, that we can construct algorithm with single qubit, but not two qubits as in Theorem \ref{th:pqalgorithm}.  
  The idea of the algorithm is based on ides from \cite{agky14,agky16,AY12,kkm2018}.
Let us describe an algorithm $B$ for $BH^t_{k,r,w}(f)$, $f=PartialMOD_m^\beta$. Let an angle $\xi=\pi/2^{\beta+1}$.

{\bf Step $1$.} The algorithm emulates guessing for $g_1=\bigoplus_{j=1}^{k}f(X^j)$. $B$ starts on a state $|\psi\rangle=\frac{1}{\sqrt{2}}|0\rangle+\frac{1}{\sqrt{2}}|1\rangle$. The algorithm measures the qubit $|\psi\rangle$ before reading any input variables. $B$ gets $|0\rangle$ or $|1\rangle$ with equal probabilities. The result of the measurement is $y_1$.

{\bf Step $2$.} The algorithm reads $X^1$.  Algorithm $B$ rotates the qubit by the angle $\xi$ if the algorithm meets $1$. It  does nothing for $0$.

{\bf Step $3$.} If $B$ meets $2$, then it measures the qubit $|\psi\rangle=a|0\rangle+b|1\rangle$. If $PartialMOD_{m_1}^\beta(X^1)=1$ then the qubit is rotated by an angle $\pi/2+v\cdot \pi$, for some integer $v$, else the qubit is rotated by an angle $u\cdot \pi$, for some integer $u$. If $y_1=1$, then $a\in\{1,-1\}$ and $b=0$. If $y_1=0$, then $a=0$ and $b\in\{1,-1\}$. The result of the measurement for the qubit $|\psi\rangle$ is $y_2$. 

{\bf Step $4$.} The algorithm reads $X^2$ and does the similar action as in Step 2.

{\bf Step $5$.} If $B$ meets $2$ then it measures the qubit $|\psi\rangle=a|0\rangle+b|1\rangle$. If $f(X^2)=PartialMOD_{m_2}^\beta(X^2)=1$, then the qubit is rotated by an angle $\pi/2+v\cdot \pi$, for some integer $v$, else the qubit is rotated by an angle $u\cdot \pi$, for some integer $u$. Note that  before Step 4 if $y_2=1$, then $|\psi\rangle=|1\rangle$; and if $y_2=0$ then $|\psi\rangle=|0\rangle$. Therefore, if $y_3=PartialMOD_{m_2}^\beta(X^2)\oplus y_2 =1$, then $a\in\{1,-1\}$ and $b=0$. If $y_3=PartialMOD_{m_2}^\beta(X^2)\oplus y_2 =0$, then $a=0$ and $b\in\{1,-1\}$. The algorithm measures $|\psi\rangle$ and outputs $y_3 = PartialMOD_{m_2}^\beta(X^2)\oplus y_2$. 

{\bf Step $i$.} The step is similar to Step 4, but the algorithm reads $X^{i-1}$ and calculates $PartialMOD_{m_2}^\beta(X^{i-1})$.

{\bf Step $i+1$.} The step is similar to Step 5, but the algorithm outputs $y_{i} = PartialMOD_{m_2}^\beta(X^{i-1})\oplus y_{i-1}$.

{\bf Step $2k+2$.} The algorithm reads and skips the last part of the input. $B$ does not need these variables, because it guesses $y_1$ and using this value we already can obtain $y_{2},...,y_{k}$ without $X^{k}$.
\Endproof
\end{proof}

This theorem gives us the following important results.
%
There is a quantum online streaming algorithm with one qubit of memory for  $BHM^t_{k,r,w}$ having a better competitive ratio than
 any classical (deterministic or randomized) online streaming algorithm with sublogarithmic size of  memory.

\paragraph*{Acknowledgements.}
The research is supported by Russian Science Foundation Grant 17-71-10152.

%
We thank Andris Ambainis, Alexanders Belovs and Abuzer Yakarilmaz from University of Latvia for helpful discussions.
\bibliographystyle{plain}
\bibliography{tcs}
\appendix
\section{Definitions of OBDDs}\label{apx:obdd}

OBDD is a restricted version of a branching program (BP). BP over a set $X$ of $n$ Boolean variables is a directed acyclic graph with two distinguished nodes $s$ (a source node) and $t$ (a sink node). We denote it $P_{s,t}$ or just $P$. Each inner node $v$ of $P$ is associated with a variable $x\in X$. A {\em deterministic} BP has exactly two outgoing edges labeled $x=0$ and $x=1$ respectively for each node $v$. The program $P$ computes a Boolean function $f(X)$ ($f:\{0,1\}^n \rightarrow \{0,1\}$) as follows: for each $\sigma\in\{0,1\}^n$ we let $f(\sigma)=1$ iff there exists at least one $s-t$ path (called {\em accepting} path for $\sigma$) such that all edges along this path are consistent with $\sigma$. A {\em size} of branching program $P$ is a number of nodes.
Ordered  Binary Decision Diagram (OBDD) is a BP with following restrictions: 

(i) Nodes can be partitioned into levels $V_1, \ldots, V_{\ell+1}$ such that  $s$ belongs to the first level  $V_1$ and sink node $t$ belongs to the last level $V_{\ell+1}$. Nodes from level $V_j$ have outgoing edges only to nodes of level $V_{j+1}$, for $j \le \ell$.

(ii)All inner nodes of one level are labeled by the same variable.

(iii)Each variable is tested on each path only once.

A {\em width} of a program $P$ is $ width(P)=\max_{1\le j\le \ell}|V_j|. $ 
OBDD $P$ reads variables in its individual  order
$\theta(P)=(j_1,\dots,j_n)$. Let $id=(1,\dots,n)$ be a natural order of input variables. If OBDD reads input variables in the order $id$, then we denote the model as id-OBDD.

Probabilistic OBDD (POBDD) can have more than two edges for a node, and we choose one of them using a probabilistic mechanism. POBDD $P$ computes a Boolean function $f$ with bounded error $0.5-\varepsilon$ if probability of the right answer is at least $0.5+\varepsilon$.

Let us define a quantum OBDD. It is given in different terms, but they are equivalent, see \cite{agkmp2005} for more details.
For a given $ n>0 $, a quantum OBDD $ P$ of width $d$ defined on $ \{0,1\}^n $, is a 4-tuple
$
    P=(T,|\psi\rangle_0,Accept,\pi),
$
where
$ T = \{ T_j : 1 \leq j \leq n \mbox{ and } T_j = (G_j^0,G_j^1) \} $ are ordered
pairs of (left) unitary matrices representing transitions. Here $ G_j^0 $ or
$ G_j^1 $ is applied on the $j$-th step. A choice is determined by the input bit.  
The vector $\ket{\psi}_0$ is a initial vector from the $ d $-dimensional Hilbert space over the field of complex numbers.  $ \ket{\psi}_0=\ket{q_0}$ where $ q_0 $ corresponds to the initial node.
$ Accept \subset \{1,\ldots,d\} $ is a set of accepting nodes.
$ \pi $ is a permutation of $ \{1,\ldots,n\} $. It defines the order of input bits.

  For any given input $ \nu \in \{0,1\}^n $, the computation of $P$ on $\nu$ can be traced by the $d$-dimensional vector from a Hilbert space over the field of complex numbers. The initial one is $ \ket{\psi}_0$. In each step $j$, $1 \leq j \leq n$, the input bit $ x_{\pi(j)} $ is tested and then the corresponding unitary operator is applied:  
$
    \ket{\psi}_j = G_j^{x_{\pi(j)}} (\ket{\psi}_{j-1}),
$ 
where  $ \ket{\psi}_j $ represents the state of the system after the $ j$-th step.
The quantum OBDD can measure one or more qubits on any steps. Let the program be in state $\ket{\psi}=(v_1,
\dots, v_d)$ before a measurement and let us measure the $i$-th qubit.
Let states with numbers $j^0_1,\dots,j^0_{d/2}$ correspond to the $0$ value of the $i$-th qubit, and states with numbers $j^1_1,\dots,j^1_{d/2}$ correspond to the $1$ value of the $i$-th qubit.
The result of the measurement of the $i$-th qubit is $1$ with probability $pr_1= \sum_{z=1}^{d/2}|v_{j^1_z}|^2$ and $0$ with probability $pr_0=1-pr_1$.
The program $P$ measures all qubits at the end of the computation process. The program  accepts the input $ \sigma $ (returns $1$ on the input) with probability $
    Pr_{accept}(\nu)=\sum_{i \in Accept} v^2_i
$, for $ \ket{\psi}_n=(v_1,\dots,v_d)$.

$P_{\varepsilon}(\nu)=1$ if $P$ accepts input $\nu\in\{0,1\}^n$ with  probability at least $ 0.5+\varepsilon$, and $P_{\varepsilon}(\nu)=0$ if  $P$ accepts the input $\nu\in\{0,1\}^n$ with  probability at most $ 0.5-\varepsilon$, for $ \varepsilon \in (0,0.5] $.
We say that a function $f$ is computed by $ P$ with a bounded error if there exists $ \varepsilon \in (0,0.5] $ such that $ P_{\varepsilon}(\nu)=f(\nu)$ for any  $\nu\in\{0,1\}^n$. We can say that $P$ computes $f$ with a bounded error $0.5-\varepsilon$. 

{\bf Automata.}
We can say that an automaton is an id-OBDD such that a transition function for each level is the same. Note that id-OBDD is OBDD with an order $id=(1,\dots,n)$.

\end{document}